\newif\ifTR
\tikzset{fit margins/.style={/tikz/afit/.cd,#1,
    /tikz/.cd,
    inner xsep=\pgfkeysvalueof{/tikz/afit/left}+\pgfkeysvalueof{/tikz/afit/right},
    inner ysep=\pgfkeysvalueof{/tikz/afit/top}+\pgfkeysvalueof{/tikz/afit/bottom},
    xshift=-\pgfkeysvalueof{/tikz/afit/left}+\pgfkeysvalueof{/tikz/afit/right},
    yshift=-\pgfkeysvalueof{/tikz/afit/bottom}+\pgfkeysvalueof{/tikz/afit/top}},
    afit/.cd,left/.initial=2pt,right/.initial=2pt,bottom/.initial=2pt,top/.initial=2pt}
\def\orcidID#1{\smash{\href{http://orcid.org/#1}{\protect\raisebox{-1.25pt}{\protect\includegraphics{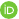}}}}}
\newcommand{\lang}{\mathcal{L}}
\newcommand{\langof}[1]{\lang \left(#1\right)}
\newcommand{\compl}[1]{\mathit{co}(#1)}
\newcommand{\elc}[1]{\widetilde{#1}} 
\newcommand{\rev}[1]{\mathit{rev}(#1)} 
\renewcommand{\det}[1]{\mathit{det}(#1)} 
\newcommand{\co}[1]{\mathit{co}(#1)} 
\newcommand{\A}{\mathcal{A}}
\newcommand{\B}{\mathcal{B}}
\newcommand{\C}{\mathcal{C}}
\newcommand{\D}{\mathcal{D}}
\newcommand{\nat}[0]{\mathbb{N}}
\newcommand{\Atwo}[0]{\A_2}
\newcommand{\inportsof}[1]{\mathit{In}_{#1}} 
\newcommand{\outportsof}[1]{\mathit{Out}_{#1}} 
\newcommand{\gates}{\delta_{\mathit{trans}}} 
\newcommand{\tprod}{\delta^{\times}} 
\newcommand{\tprodinter}{\gamma^{\times}} 
\newcommand{\autfromto}[3]{#1^{#2 \rightarrow #3}}
\newcommand{\autij}[3]{#1^{#2,#3}}
\newcommand{\concat}[0]{.}
\newcommand{\inportsets}{\mathcal{I}} 
\newcommand{\outportsets}{\mathcal{F}} 
\newcommand{\innerinportsets}{\inportsets^\mathsf{in}}
\newcommand{\inneroutportsets}{\outportsets^\mathsf{in}}
\newcommand{\gsym}{\Gamma} 
\newcommand{\pre}[1]{#1_{\mathit{pre}}} 
\newcommand{\suf}[1]{#1_{\mathit{suf}}} 
\newcommand{\insymgrp}[2]{\mathit{In}_{#1}^{#2}} 
\newcommand{\outsymgrp}[2]{\mathit{Out}_{#1}^{#2}} 
\newcommand{\detsuccsname}[0]{\mathit{powSC}}
\newcommand{\detsuccs}[1]{\detsuccsname(#1)}
\newcommand{\methodEqual}[0]{\texttt{equal}\xspace}
\newcommand{\methodDisjoint}[0]{\texttt{disjoint}\xspace}
\newcommand{\blinded}[1]{\ifx\blindreview\undefined #1 \else \textcolor{black!65}{[blinded for review]}\fi}
\newcommand{\aligater}{\ifx\blindreview\undefined\texttt{AliGater}\else\texttt{Tool}\fi\xspace}
\newcommand{\mata}{\texttt{Mata}\xspace}
\newcommand{\networkx}{\texttt{NetworkX}\xspace}
\newcommand{\rabit}{\texttt{Reduce}\xspace}
\newcommand{\NfaBench}[0]{\texttt{nfa-bench}\xspace}
\newcommand{\fwdpws}[0]{\texttt{fwd powerset}\xspace}
\newcommand{\revpws}[0]{\texttt{rev powerset}\xspace}
\newcommand{\plusmin}[0]{\texttt{+\;min}\xspace}
\newcommand{\sequential}[0]{\texttt{seq}\xspace}
\newcommand{\gate}[0]{\texttt{gate}\xspace}
\newcommand{\adela}[1]{\textcolor{orange}{\ifmmode \text{[A: #1]}\else [A: #1] \fi}}
\newcommand{\ol}[1]{\textcolor{blue}{\ifmmode \text{[OL: #1]}\else [OL: #1] \fi}}
\newcommand{\js}[1]{\textcolor{purple}{\ifmmode \text{[JS: #1]}\else [JS: #1] \fi}}
\newcommand{\duri}[1]{\textcolor{magenta}{\ifmmode \text{[Duri: #1]}\else [Duri: #1] \fi}}
\newcommand{\lh}[1]{\textcolor{violet}{\ifmmode \text{[LH: #1]}\else [LH: #1] \fi}}
\title{On Complementation of Nondeterministic Finite Automata without Full Determinization\ifTR{}\\ (Technical Report)\fi\\\vspace*{-3mm}
}
\author{Lukáš Holík\inst{1,2}\orcidID{https://orcid.org/0000-0001-6957-1651}
	\and Ondřej Lengál\inst{1}\orcidID{https://orcid.org/0000-0002-3038-5875}
	\and \\
  Juraj Major\inst{3}\orcidID{https://orcid.org/0009-0007-1871-9047}
	\and Adéla Štěpková\inst{3}\orcidID{https://orcid.org/0009-0004-4639-1673}
	\and Jan Strejček\inst{3}\orcidID{https://orcid.org/0000-0001-5873-403X}
}
\institute{Brno University of Technology, Brno, Czech Republic 
  \and Aalborg University, Aalborg, Denmark
	\and Masaryk University, Brno, Czech Republic 
}
\titlerunning{Complementation of Nondeterministic Finite Automata without Full Determinization}
\begin{document}

\maketitle


\begin{abstract}
  \vspace{-6mm}
  Complementation of finite automata is a basic operation used in
  numerous applications. The standard way to complement a
  nondeterministic finite automaton (NFA) is to transform it into an
  equivalent deterministic finite automaton (DFA) and complement the
  DFA. The DFA can, however, be exponentially larger than the corresponding
  NFA.
  In this paper, we study several alternative approaches to complementation,
  which are based either on reverse powerset construction or on two novel
  constructions that exploit a~commonly occurring structure of NFAs.
  Our experiment on a~large 
  data set shows that using
  a~different than the classical approach can, in many cases, yield significantly
  smaller complements.
  %
\end{abstract}




\vspace{-7.5mm}
\section{Introduction}
\vspace{-1.5mm}
Complementation of \emph{finite automata} is an operation with many applications
in formal methods.
It is used, e.g.,
in regular model checking~\cite{KestenMMPS01,BouajjaniJNT00,BouajjaniHRV12},
representing extended regular expressions~\cite{VarataluVE25,ChocholatyHHHLS25},
to implement negation in automata-based decision procedures for logics such as
Presburger arithmetic~\cite{WolperB95,HabermehlHHHL24} or monadic
second order theories like WS1S or
MSO(Str)~\cite{Buchi60,ElgaardKM98,GlennG96,KelbMMG97,FiedorHLV19,FiedorHJLV17,AyariB00},
or as the basic underlying operation for testing language inclusion and
equivalence over automata.
Complementing \emph{deterministic finite automata} (DFAs) is
an easy task: it is sufficient to add a~single state, direct all missing
transitions to this state, and swap accepting and non-accepting states.


In practice, \emph{nondeterministic finite automata} (NFAs) are often favored over DFAs due to their potentially much
(up to exponentially) smaller size.
The classical approach to complementing NFAs goes through determinization of
the input NFA using the \emph{powerset construction} into a~DFA and then using DFA complementation.
While easy to implement, this approach is prone to cause a~blow-up in the number
of states, as determinizing an NFA with~$n$ states may, in the worst case,
result in a~DFA with~$2^n$ states~\cite{RabinS59} and the size of the
complement would then also be exponential.
%
Some automata-based algorithms are highly sensitive to the sizes
of complement automata, such as decision procedures of certain
logics~\cite{WolperB95,HabermehlHHHL24,Buchi60,ElgaardKM98,GlennG96,KelbMMG97,FiedorHLV19,FiedorHJLV17,AyariB00},
where the output of the complement may be the basic structure over which another
complementation is performed (usually after projection, which can turn
a~potentially deterministic automaton into a~nondeterministic one);
for some of the logics, the increase in the size of the complement is
the underlying cause of their non-elementary complexity~\cite{Meyer72}.
Due to this, some of the applications tried to avoid complementation altogether,
for instance using symbolic
techniques~\cite{VarataluVE25,FiedorHLV19,FiedorHJLV17}.
This is, however, not always possible or feasible, as symbolic techniques often
disallow to use, e.g., standard automata reduction techniques
(cf.,~\cite{BustanG03,Hopcroft71}),
which often have a~great impact on the performance of the applications.


Since the lower bounds on the worst-case sizes of deterministic and complement
automata are both~$2^n$~\cite{SakodaS78,Birget93,HolzerK02,Jiraskova05}, one
might think that the determinization-based approach is optimal.
In practice, this is, however, far from the truth.
Consider, e.g., the NFA~$\Atwo$ accepting the language $\{a,b\}^*\concat \{a\}
\concat \{a,b\}^2$ shown in \cref{fig:reverse-ex} (left).
While the (minimal) deterministic complement 
has 8~states, there exists an
NFA with 4 states given in \cref{fig:reverse-ex} (right) that is
a~complement of~$\Atwo$. 
This complement was obtained as follows.
We reversed~$\Atwo$, then determinized it
(which
is easy as $\Atwo$ is reverse-deterministic), complemented the output by swapping
accepting and non-accepting states, reversed again, and, finally, removed
one unreachable state.
%
We can generalize the example above to the family of NFAs~$\A_n$ whose languages
are $\{a,b\}^*\concat \{a\} \concat \{a,b\}^n$ for $n \in \nat$.
Here, the size of the minimal complement DFA for the NFA~$\A_n$ is~$2^{n+1}$,
while the complement NFA constructed by the \emph{reverse powerset} procedure mentioned above has~$n+2$
states, the same as~$\A_n$.
This example is a~motivation for a deeper study of NFA
complementation.

In this paper, we present several alternative approaches to NFA
complementation. Besides the reverse powerset complementation
(\cref{sec:reverse-powerset}), we introduce two novel complementation
constructions that target NFAs with a~particular structure (containing
several strongly connected components), common in practice. We first
introduce a basic version of the novel constructions on NFAs with a~very
restricted structure (\cref{sec:port-simple}) and then we briefly
present the generalized version of these constructions and the
generalized complementation problem that allows to combine these
constructions (\cref{sec:generalized-compl}).  Our experimental
evaluation (\cref{sec:experiments}) shows that in a~significant
number of cases, using an alternative complementation method can give
a~much smaller complement than the classical construction.
\ifTR
Additional details can be found in the appendices,
\else
Due to the page limit, many parts of the paper are relegated to the technical
report~\cite{techrep},
\fi
including a~precise description of the generalized complementation
constructions with the corresponding correctness proofs, an example of NFA subclass
with a~subexponential complement size, implementation details, and
additional experiments.
\vspace{-3.0mm}
\paragraph{Related Work.}\label{sec:label}
\vspace{-0.0mm}
While complementation of automata over infinite words is a~lively
topic (e.g., \cite{HavlenaLS25,AllredU18,BlahoudekDS20}), complementation of NFAs seems to be under-researched
with not many relevant prior work.
The powerset approach to determinization, which is the basic
block of the classical complementation, can be traced to Rabin and
Scott~\cite{RabinS59}.
Optimizations of the powerset determinization
were
proposed in~\cite{GlabbeekP08}.

From the theoretical side, the exponential lower bound of NFA complementation
was studied with respect to various alphabet sizes (the larger the alphabet
size, the easier it is to construct an NFA whose complement is forced to be exponential)
in~\cite{SakodaS78,Birget93,HolzerK02,Jiraskova05}.



\vspace{-3.0mm}
\section{Preliminaries}\label{sec:prelim}
\vspace{-2.0mm}


An \emph{alphabet} $\Sigma$ is a finite nonempty set of
\emph{symbols}.  
A \emph{word} over 
$\Sigma$ 
is a 
sequence $u = u_1\ldots u_n$ where $u_i\in \Sigma$ for all $1\leq i \leq n$,  
with its
\emph{length} $n$ denoted by $|u|$.
The \emph{empty word} is denoted by $\varepsilon$.  
The set of all words over $\Sigma$ is denoted by $\Sigma^*$, and 
its subsets are \emph{languages} over $\Sigma$.
The \emph{concatenation} of $u$ with a word $v = u_{n+1} \ldots u_m$ is the word
$uv = u_1 \ldots u_m$. 
The \emph{reverse} of $u$ is the word $\rev{u} = u_nu_{n-1} \ldots u_1$. 
The \emph{concatenation} of languages $L, L' \subseteq \Sigma^*$ is the
language $L\concat L' = \{uv \mid u \in L, v \in L'\}$.
The~\emph{reverse} of $L$ is the language
$\rev{L} = \{ \rev{w} \mid w \in L\}$ and its 
\emph{complement} is the language
$\compl{L} = \Sigma^* \smallsetminus L$.


\textbf{Finite Automata.}
A \emph{nondeterministic finite automaton (NFA)} is defined as a tuple
$\A=(Q, \Sigma, \delta, I, F)$, where $Q$ is a 
finite set of \emph{states}, $\Sigma$ is an alphabet,
$\delta\subseteq Q \times \Sigma \times Q$ is a \emph{transition
  relation}, $I \subseteq Q$ is a set of \emph{initial states}, and
$F \subseteq Q$ is a set of \emph{accepting states} (also called
\emph{final states}). We write $q \xrightarrow{a} r \in \delta$
instead of $(q,a,r)\in\delta$. If $\delta$ is clear from the context,
we write just $q \xrightarrow{a} r$.
The \emph{size} of $\A$ is defined as $|\A| = |Q|$.
We abuse the notation and use~$\delta$ also as the function
$\delta\colon Q\times\Sigma\rightarrow 2^Q$ defined as
\mbox{$\delta(q,a)=\{r \in Q\mid q\xrightarrow{a} r\}$.}
We also extend the
function $\delta$ to sets of states $P\subseteq Q$ as
$\delta(P,a)=\bigcup_{q\in
  P}\delta(q,a)$. 
%
A~\emph{deterministic finite automaton (DFA)} is an NFA
with $|I| = 1$ and
$|\delta(q,a)|\le 1$ for all $q \in Q$ and $a \in \Sigma$. A~DFA is
\emph{complete} if $|\delta(q,a)|=1$ for all $q \in Q$ and~$a \in \Sigma$.
A \emph{run} of an NFA $\A$ over a~word $v = v_1v_2 \ldots v_n$ is a
sequence of transitions $q_0\xrightarrow{v_1} q_{1}, q_1\xrightarrow{v_2} q_{2},\ldots,q_{n-1}\xrightarrow{v_n} q_{n}$ with $q_0\in I$.
It is \emph{accepting} if $q_n \in F$.
A state that appears in a run over some word~$v \in \Sigma^*$ is
\emph{reachable}, else it is \emph{unreachable}. 
A state is \emph{reachable from a state} $q$ if it is reachable in the NFA $(Q, \Sigma, \delta, \{q\}, F)$.
%
$\A$ \emph{accepts} the
language $\langof{\A}$ of all words over $\Sigma$ for which it has an accepting run. 
%
Automata $\A$ and $\B$ are
\emph{equivalent} iff $\langof{\A} = \langof{\B}$.

The \emph{reverse} of $\A$ is the NFA
$\rev{\A} = (Q, \Sigma, \rev{\delta}, F, I)$, where
$ \rev{\delta} = \{ r \xrightarrow{a} q\mid q \xrightarrow{a} r \in
\delta \}$. $\A$ is called \emph{reverse-deterministic} iff $\rev{\A}$ is 
a~DFA.
An NFA $\C$ is called a~\emph{complement} of $\A$ with respect to an alphabet $\Lambda$ if
$\langof{\C}=\Lambda^*\smallsetminus\langof{\A}$.
If $\Lambda$ is not specified, we assume that $\Lambda = \Sigma$.
%
Given two NFAs $\A_1 = (Q_1, \Sigma, \delta_1, I_1, F_1)$ and
$\A_2 = (Q_2, \Sigma, \delta_2, I_2, F_2)$ where
$Q_1 \cap Q_2 = \emptyset$, their \emph{union} is the NFA
$\A_1 \Cup \A_2 = (Q_1 \cup Q_2, \Sigma, \delta_1 \cup \delta_2, I_1
\cup I_2, F_1 \cup F_2)$. 
%
%
A \emph{strongly connected component (SCC)} of $\A$ is a~maximal subset
$C \subseteq Q$ in which every state is reachable from every state.  
%
Note that the term \emph{component} will be used in a more general
sense later.

\textbf{Forward Powerset Complementation.}
\label{sec:forward-powerset} 
The standard complementation first uses the \emph{powerset
  construction} to transform a given NFA
$\A = (Q, \Sigma, \delta, I, F)$ into an equivalent complete DFA
$\det{\A} = (Q', \Sigma, \delta', I', F')$, where $Q'=2^Q$,
$\delta' = \{P\xrightarrow{a} \delta(P,a) \mid P\in Q', a\in\Sigma\}$,
$I'=\{I\}$, and $F' = \{ P \in Q' \mid P \cap F \neq \emptyset\}$.  In
the following, we assume that $\det{\A}$ does not contain unreachable
states (only the reachable part of $Q'$ is constructed). This does not
improve the upper bound on the size of the DFA, which is still
$2^{|Q|}$.
%
%
%
Given a complete DFA $\D=(Q, \Sigma, \delta, I, F)$, its complement can be constructed as 
$\co{\D}=(Q, \Sigma, \delta, I, Q \smallsetminus F)$.
A complement of an NFA~$\A$ can thus be constructed as $\co{\det{\A}}$.
We call this construction \emph{forward powerset complementation}.
%
%


\newcommand{\figReverse}[0]{
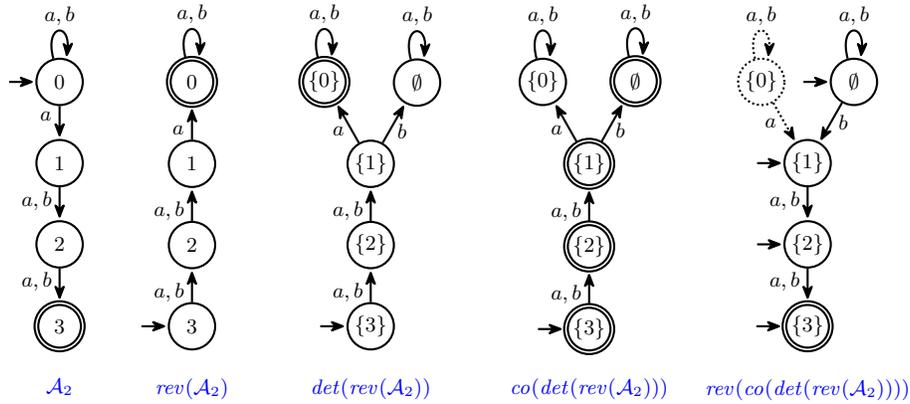
\begin{figure}[tb]
  \centering
  \begin{tikzpicture}[scale=.88,smallautomaton,node distance=5mm,every state/.style={minimum size=7mm,inner sep=0pt}]
    \tikzset{aname/.style={blue,yshift=2mm}}
    \begin{scope}
      \node[state, initial] (0) {$0$};
      \node[state, below = of 0] (1) {$1$};
      \node[state, below = of 1] (2) {$2$};
      \node[state, below = of 2, accepting] (3) {$3$};
      \node[aname, below = of 3] (label) {$\A_2$}; 
        
      \path[->]
      (0) edge [loop above] node[above=-1pt] {$a, b$} ()
      (0) edge node[left, pos=0.3] {$a$} (1)
      (1) edge node[left, pos=0.4] {$a,b$} (2)
      (2) edge node[left, pos=0.4] {$a,b$} (3)
      ;
    \end{scope}
    \begin{scope}[xshift=2cm]
      \node[state, accepting] (0) {$0$};
      \node[state, below = of 0] (1) {$1$};
      \node[state, below = of 1] (2) {$2$};
      \node[state, below = of 2, initial] (3) {$3$};
      \node[aname, below = of 3] (label) {$\rev{\A_2}$}; 
        
      \path[->]
      (0) edge [loop above] node[above=-1pt] {$a, b$} ()
      (1) edge node[left, pos=0.3] {$a$} (0)
      (2) edge node[left, pos=0.3] {$a,b$} (1)
      (3) edge node[left, pos=0.3] {$a,b$} (2)
      ;
    \end{scope}
    \begin{scope}[xshift=4cm]
      \node[state, accepting] (0) {$\{0\}$};
      \node[state, below = of 0,xshift=7mm] (1) {$\{1\}$};
      \node[state, above = of 1, xshift=7mm] (0b) {$\emptyset$};
      \node[state, below = of 1] (2) {$\{2\}$};
      \node[state, below = of 2, initial] (3) {$\{3\}$};
      \node[aname, below = of 3] (label) {$\det{\rev{\A_2}}$}; 
        
      \path[->]
      (0) edge [loop above] node[above=-1pt] {$a, b$} ()
      (0b) edge [loop above] node[above=-1pt] {$a, b$} ()
      (1) edge node[left, pos=0.3] {$a$} (0)
      (1) edge node[right, pos=0.3] {$b$} (0b)
      (2) edge node[left, pos=0.3] {$a,b$} (1)
      (3) edge node[left, pos=0.3] {$a,b$} (2)
      ;
    \end{scope}
    \begin{scope}[xshift=7.3cm]
      \node[state] (0) {$\{0\}$};
      \node[state, below = of 0, accepting, xshift=7mm] (1) {$\{1\}$};
      \node[state, above = of 1, accepting, xshift=7mm] (0b) {$\emptyset$};
      \node[state, below = of 1, accepting] (2) {$\{2\}$};
      \node[state, below = of 2, accepting, initial] (3) {$\{3\}$};
      \node[aname, below = of 3,yshift=.5mm] (label) {$\co{\det{\rev{\A_2}}}$}; 
        
      \path[->]
      (0) edge [loop above] node[above=-1pt] {$a, b$} ()
      (0b) edge [loop above] node[above=-1pt] {$a, b$} ()
      (1) edge node[left, pos=0.3] {$a$} (0)
      (1) edge node[right, pos=0.3] {$b$} (0b)
      (2) edge node[left, pos=0.3] {$a,b$} (1)
      (3) edge node[left, pos=0.3] {$a,b$} (2)
      ;
    \end{scope}
    \begin{scope}[xshift=10.6cm,yshift=0mm]
      \node[state, densely dotted] (0) {$\{0\}$};
      \node[state, below = of 0, initial, xshift=7mm] (1) {$\{1\}$};
      \node[state, above = of 1, initial, xshift=7mm] (0b) {$\emptyset$};
      \node[state, below = of 1, initial] (2) {$\{2\}$};
      \node[state, below = of 2, accepting, initial] (3) {$\{3\}$};
      \node[aname, below = of 3] (label) {$\rev{\co{\det{\rev{\A_2}}}}$}; 
        
      \path[->]
      (0) edge [loop above, densely dotted] node[above=-1pt] {$a, b$} ()
      (0b) edge [loop above] node[above=-1pt] {$a, b$} ()
      (0) edge [densely dotted] node[left] {$a$} (1)
      (0b) edge node[right] {$b$} (1)
      (1) edge node[left, pos=0.4] {$a,b$} (2)
      (2) edge node[left, pos=0.4] {$a,b$} (3)
      ;
    \end{scope}
  \end{tikzpicture}
  \caption{NFA $\A_2$ accepting language
    $\{a, b\}^*.\{a\}.\{a, b\}^2$ and all phases of its reverse
      powerset complementation; the dotted part of the complement
      automaton is unreachable.}
  \label{fig:reverse-ex}
  \vspace*{-4mm}
\end{figure}
}

\vspace{-3.0mm}
\section{Reverse Powerset Complementation} \label{sec:reverse-powerset}
\vspace{-2.0mm}

The idea of this approach to complementation is simple: if we reverse an
automaton, then complement it (for example using the powerset
complementation given above) and reverse again, we obtain the complement of
the original automaton. Formally, given an NFA $\A$, its complement can
be constructed as $\rev{\co{\det{\rev{\A}}}}$. 
We call this approach 
\emph{reverse powerset complementation}. \cref{fig:reverse-ex} shows
all phases of this complementation process on an example automaton.

\figReverse   

Contrary to the forward powerset construction, this method can yield
a~(forward-) nondeterministic automaton, which may be significantly smaller
than the minimal deterministic complement. This can be documented by a
generalization of the automata in \cref{fig:reverse-ex}. For any
$n \in \mathbb{N}$, there exists an NFA $\A_n$ with $n+2$ states that
accepts the language $\{a, b\}^*.\{a\}.\{a, b\}^n$. While the minimal forward powerset complement
of $\A_n$ has $2^{n+1}$ states 
(intuitively, the DFA must store a bit vector of $n+1$ elements
tracking which of the last $n+1$ symbols read were $a$), the reverse
powerset complementation produces an NFA with $n+2$ reachable
states. This is due to $\A_n$ being reverse-deterministic. Our
experiments in \ifTR\cref{sec:additional-exp}\else\cite{techrep}\fi{} show that the reverse
powerset construction outperforms the forward one in many cases.

\textbf{Heuristic for Forward vs.\ Reverse Powerset.}
Naturally, the powerset construction can be efficient
in one direction (forward or reverse) while causing a blow-up in the
other. To address this, we designed a cheap heuristic to choose the more
favorable direction for a given NFA, for cases when running
a~portfolio is too expensive.

The heuristic sums the sizes of all powerset successors of each state
in a given NFA.  Formally, for an NFA
$\A = (Q, \Sigma, \delta, I, F)$, we define
$\mathit{sc}(q) = \{\delta(q, a) \mid a \in \Sigma \}$ for each state
$q \in Q$ and compute $
\detsuccs{\A} = |I| + \sum_{q \in Q} \sum_{ S \in \mathit{sc}(q)} |S|$.
We emphasize that if $q$ has the same successors under two different
symbols $a, b \in \Sigma$, then the set $\delta(q, a) = \delta(q, b)$
contributes to the sum only once.

Intuitively, a higher value of $\detsuccs{\A}$ should indicate a
higher number of states produced by the powerset construction applied
to $\A$. Hence, comparing $\detsuccs{\A}$ and $\detsuccs{\rev{\A}}$
gives a hint which of the powerset
constructions has a greater risk of blow-up. The heuristic's performance is
experimentally evaluated in \ifTR\cref{sec:additional-exp}\else\cite{techrep}\fi.



\vspace{-3.0mm}
\section{Sequential and Gate Complementation} \label{sec:port-simple}
\vspace{-2.0mm}

This section introduces the basic ideas of two techniques called
\emph{sequential} and \emph{gate complementation} with use on
automata with a specific shape. The general form of these techniques
is presented in \cref{sec:generalized-compl}.

Consider an NFA $\A$ that can be seen as two disjoint NFAs
$\A_1=(Q_1,\Sigma,\delta_1,I_1,\{q_F\})$ and
$\A_2=(Q_2,\Sigma,\delta_2,\{q_I\},F_2)$ connected with a single
transition $q_F\xrightarrow{c}q_I$ for some $c\in\Sigma$, i.e.,
$\A=(Q_1\cup
Q_2,\Sigma,\delta_1\cup\delta_2\cup\{q_F\xrightarrow{c}q_I\},I_1,F_2)$.
The automaton $\A$ accepts the language $L_1.\{c\}.L_2$ where
$L_1=\langof{\A_1}$ and $L_2=\langof{\A_2}$. An example of such an NFA
$\A$ and the corresponding automata~$\A_1$ and~$\A_2$ can be found in
\cref{fig:port-simple-ex}. The transition $q_F\xrightarrow{c}q_I$ is
called the \emph{transfer transition} and automata $\A_1$ and $\A_2$ are
referred to as the \emph{front} and the \emph{rear component} of $\A$,
respectively.

\vspace{-3.0mm}
\subsection{Sequential Complementation}\label{sec:seq-simple}
\vspace{-1.5mm}
\emph{Sequential complementation} builds a potentially nondeterministic
complement $\C$ of $\A$ from a complete DFA $\A_1'=\det{\A_1}$
equivalent to $\A_1$ and an NFA complement $\C_2$ of $\A_2$. The
technique is called sequential complementation because we first
complement $\A_2$ using an arbitrary complementation approach,
and only then we \mbox{build the complement of~$\A$.}

The construction is based on the following observation. The language
$\co{L_1.\{c\}.L_2}$ consists of words $w$ such that, for all pairs of
words $u, v$ satisfying $ucv = w$, if $u \in L_1$ then
$v \in \compl{L_2}$.
Note that if $w$ does not contain any $c$, the
condition is trivially satisfied.
Therefore, $\C$ will simulate $\A_1'$ and whenever a final state of
$\A_1'$ is visited, that is, $\C$ finished reading a 
prefix
$u \in L_1$, a~transition under~$c$ initiates a new instance of $\C_2$ that will check
that the corresponding suffix $v$ is indeed in $\co{L_2}$. Automaton $\C$ will
accept if each initiated instance of $\C_2$ accepts, meaning that for
all possible splittings of the input word $w$ into $ucv$ where
$u \in L_1$, it holds that $v \in \co{L_2}$.

Formally, let $\A_1' = (Q'_1, \Sigma, \delta'_1, \{q'_0\}, F'_1)$ be a complete DFA
representing the language~$L_1$ and
$\C_2=(\elc{Q}_2, \Sigma, \elc{\delta}_2, \elc{I}_2, \elc{F}_2)$ be an
NFA representing $\co{L_2}$. 
We construct an NFA
$\C = (\elc{Q}, \Sigma, \elc{\delta}, \elc{I}, \elc{F})$ representing
$\co{L_1.\{c\}.L_2}$ as follows.
\begin{itemize}
\item $\C$'s states are pairs composed of the current state of $\A'_1$ and states of instances of $\C_2$, $\elc{Q}= Q'_1 \times 2^{\elc{Q}_2}$.
\item For each $a\in\Sigma$, the transition relation $\elc{\delta}$
  simulates the corresponding transition of $\A_1'$ and arbitrary
  transitions under $a$ of the running instances of $\C_2$. Moreover,
  it initiates a~new instance of $\C_2$ whenever $\A_1'$ moves from an
  accepting state by reading~$c$. Formally, for each
  $(p,\{r_1,\ldots,r_n\})\in\elc{Q}$, $a\in\Sigma$, and transitions
  $p\xrightarrow{a}q\in\delta'_1$,
  $r_i\xrightarrow{a}s_i\in\elc{\delta}_2$ for all $1\le i\le n$, the
  transition relation $\elc{\delta}$ contains transitions
  \begin{itemize}
  \item $(p,\{r_1,\ldots,r_n\})\xrightarrow{a}(q,\{s_1,\ldots,s_n\})$
    if $p\not\in F'_1$ or $a\neq c$, and
  \item
    $(p,\{r_1,\ldots,r_n\})\xrightarrow{a}(q,\{s_1,\ldots,s_n\}\!\cup\!\{s_0\})$
    for all $s_0\!\in\!\elc{I}_2$ if $p\!\in\! F'_1$ and $a\!=\!c$.
  \end{itemize}
\item $\C$ starts in the initial state $q'_0$ of $\A_1'$ with no
  running instance of $\C_2$, i.e., $\elc{I}=\{(q'_0,\emptyset)\}$.
\item $\C$ accepts whenever all running instances of $\C_2$ accept, i.e., $\elc{F}=Q'_1 \times 2^{\elc{F}_2}$.
\end{itemize}
\vspace{-1mm}

\begin{theorem}
  The NFA $\C$ accepts $\compl{\langof{\A}}$. 
\end{theorem}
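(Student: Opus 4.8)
The plan is to establish the two language inclusions by characterizing the runs of $\C$ on a fixed word, exploiting that the first component of each state of $\C$ deterministically tracks the run of $\A_1'$. Fix $w = a_1 \cdots a_m$. Since $\A_1'$ is a complete DFA, there is a unique sequence $p_0, \dots, p_m$ with $p_0 = q_0'$ and $p_k = \delta_1'(p_{k-1}, a_k)$, and the first components of \emph{every} run of $\C$ on $w$ coincide with it; moreover $p_k \in F_1'$ iff $a_1 \cdots a_k \in L_1$. I would then set $J = \{\, j \mid a_j = c \text{ and } a_1 \cdots a_{j-1} \in L_1 \,\}$ and, for $j \in J$, let $v_j = a_{j+1} \cdots a_m$. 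By the observation preceding the construction, $w \in \compl{\langof{\A}}$ iff $v_j \in \co{L_2}$ for every $j \in J$, i.e.\ iff $\C_2$ has an accepting run on each $v_j$. The whole claim thus reduces to: $\C$ accepts $w$ iff $\C_2$ accepts every $v_j$.

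For the direction ``$\C$ accepts $w$ $\Rightarrow$ $w \in \compl{\langof{\A}}$'', I would prove a forward-tracing lemma for the second components $S_0, \dots, S_m$ of any run of $\C$: by the definition of $\elc{\delta}$, every $r \in S_{k-1}$ has a $\C_2$-successor under $a_k$ lying in $S_k$, hence from any $r \in S_k$ one can follow successors to build a genuine run of $\C_2$ over $a_{k+1}\cdots a_m$ that ends in some state of $S_m$. If the run of $\C$ is accepting then $S_m \subseteq \elc{F}_2$, so this traced run is accepting. Now for each $j \in J$, reading $a_j = c$ from $p_{j-1} \in F_1'$ forces the spawning transition, placing an initial state $s_0 \in \elc{I}_2$ into $S_j$; tracing it forward yields an accepting run of $\C_2$ on $v_j$, so $v_j \in \co{L_2}$.

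For the converse, I would assume an accepting run $\rho_j$ of $\C_2$ on each $v_j$ and assemble an accepting run of $\C$ by taking $S_k$ to be the union, over $j \in J$ with $j \le k$, of the state of $\rho_j$ at position $k$ (so $S_k = \emptyset$ whenever no instance is active, which in particular makes $\C$ accept vacuously when $J = \emptyset$). The first components follow $\A_1'$, and at each $j \in J$ the spawning transition inserts the initial state of $\rho_j$. The step I expect to be the main obstacle is checking that each consecutive pair $(S_{k-1}, S_k)$ is a legal $\elc{\delta}$-transition: since the set component identifies instances that currently sit in the same $\C_2$-state, two runs $\rho_j, \rho_{j'}$ meeting at position $k$ but branching to different successors at position $k+1$ cannot be represented, because $\elc{\delta}$ selects a single successor per element. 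I would resolve this with a merging argument: whenever two of the chosen runs occupy the same state at the same position, replace the tail of one by the tail of the other (legitimate, as from that state they read the same remaining suffix and the replacement tail is still accepting), so that coinciding runs stay identified thereafter. After this normalization the sets $S_0, \dots, S_m$ evolve by legal transitions of $\elc{\delta}$ and end in $S_m \subseteq \elc{F}_2$, yielding an accepting run of $\C$ on $w$ and completing the equivalence.
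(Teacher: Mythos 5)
Your proof is correct and follows essentially the same route as the paper's own (sketched) proof: for one inclusion you forward-trace the set component of an accepting run of $\C$ to extract an accepting run of $\C_2$ from each spawned instance, and for the other you assemble an accepting run of $\C$ from chosen accepting runs of $\C_2$. The one place you go beyond the paper's sketch is the merging step for instances that coincide in a state --- a genuine subtlety of the set-valued second component that the sketch glosses over --- and your resolution (splicing tails of accepting runs over the common remaining suffix) is sound.
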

\begin{proof}[Proof (sketch)]
  Recall that $\langof{\A}=L_1.\{c\}.L_2$.  First, consider $w \in
  \langof{\A}$. Then there are $u \in L_1=\langof{\A_1'}$ and $v \in
  L_2$ such that $w = ucv$. As
  $\A_1'$ is deterministic, it has to reach a~state $q_f\in
  F_1'$ after reading $u$. Hence,
  $\C$ can reach only states of the form $(q_f,R)$ after reading
  $u$. When $\C$ reads
  $c$ from this state, it reaches a state $(q',R')$, where
  $R'$ has to contain some initial state $s_0$ of
  $\C_2$. However, $v\in L_2$ implies that $\C_2$ does not accept~$v$.
  Hence, each state $(q'',R'')$ of $\C$ reached from
  $(q',R')$ by reading
  $v$ is not accepting as it cannot satisfy
  $R''\subseteq\elc{F}_2$. To sum up, $\C$ has no accepting run over $ucv=w$.
  
  Now assume that $w \notin \langof{\A}$. As
  $\A_1'$ is deterministic and complete, it has a single run over
  $w$. Whenever the run reaches an accepting state over some prefix
  $u$ of $w$, we know that $u\in
  L_1$ and thus the corresponding suffix cannot be of the form
  $cv\in\{c\}.L_2$ as that would contradict the assumption
  $w=ucv\not\in\langof{\A}$. In other words, if the prefix
  $u$ is followed by $c$, then
  $\C_2$ has an accepting run over the corresponding suffix
  $v$ as $v\not\in L_2$. We can construct an accepting run of
  $\C$ over $w$ such that whenever the automaton
  $\A_1'$ tracked in the first element of the states of
  $\C$ reaches an accepting state and $\C$ reads
  $c$, we add to the second element of the state of
  $\C$ the initial state of the corresponding accepting run of~$\C_2$
  and then follow this run in the future transitions of
  $\C$. After reading the whole~$w$,
  the second element of the reached state of
  $\C$ will contain only accepting states of
  $\elc{F}_2$. Thus, the constructed run of $\C$ over
  $w$ is accepting and $w\in\langof{\C}$.
\end{proof}


\begin{figure}[tp]
    \centering
    \begin{tikzpicture}[scale=.88,smallautomaton,node distance=5mm,every state/.style={minimum size=6mm,inner sep=1pt}]
    \tikzset{aname/.style={blue}}
    \begin{scope}
        \node[state, initial, initial where = above] (0) {$0$};
        \node[state, below = of 0] (1) {$1$};
        \node[state, below = of 1,yshift=-4mm] (2) {$2$};
        \node[state, below = of 2] (3) {$3$};
        \node[state, accepting, below = of 3] (4) {$4$};
         
        \path[->]
        (0) edge node[left] (l1) {$a, b$} (1)
        (1) edge node[left] {$a$} (2)
        (2) edge [loop left] node {$a, b$} (2)
        (2) edge node[left] {$a$} (3)
        (3) edge node[left] {$a, b$} (4)
        ;

        \node[aname, left = of l1,xshift=5mm] (a1) {$\A_1$};
        \node[aname] (a2) at (a1|-3) {$\A_2$};
        \node[aname, left = of 0,yshift=6mm,xshift=7mm] (a) {$\A$};
        \begin{scope}[on background layer]
          \node [draw=blue!20, fill=blue!20, fit={(0) (1) (a1)},
            rounded corners=14pt, inner sep=3pt] (b1) {};
          \node [draw=blue!20, fill=blue!20, fit={(2) (4) (a2)},
          rounded corners=14pt, inner sep=3pt] (b2) {};
        \end{scope}
    \end{scope}
        
    \begin{scope}[xshift=3cm]
        \node[state, initial, initial where = above] (0) {$0$};
        \node[state, accepting, below = of 0] (1) {$1$};
        \node[aname, left = of 0,yshift=6mm,xshift=7mm] (a1) {$\A_1$};

        \node[state, initial, initial where = above, below = of 1,yshift=-4mm] (2) {$2$};
        \node[state, below = of 2] (3) {$3$};
        \node[state, accepting, below = of 3] (4) {$4$};
        \node[aname, left = of 2,yshift=6mm,xshift=7mm] (a2) {$\A_2$};

        \path[->]
        (0) edge node[left] {$a, b$} (1)
        ;

        \path[->]
        (2) edge [loop left] node {$a, b$} (2)
        (2) edge node[left] {$a$} (3)
        (3) edge node[left] {$a, b$} (4)
        ;
    \end{scope}

    \begin{scope}[xshift=6.4cm]
        \node[state, initial, initial where = above] (0) {$0$};
        \node[state, accepting, below = of 0] (1) {$1$};
        \node[state, right = of 1,xshift=1mm] (s) {$s$};
        \node[aname, left = of 0,yshift=6mm,xshift=7mm,anchor=east] (a1) {$\det{\A_1}{=}\A_1'$}; 
        
        \node[state, initial, initial where = above, below = of 1,yshift=-4mm] (2) {$5$};
        \node[state, initial, below = of 2] (3) {$6$};
        \node[state, initial, accepting, below = of 3] (4) {$7$};
        \node[aname, left = of 2,yshift=6mm,xshift=7mm] (c2) {$\C_2$};

        \path[->]
        (0) edge node[left] {$a, b$} (1)
        (1) edge node[above] {$a, b$} (s)
        (s) edge [loop above] node {$a, b$} (s)
        ;

        \path[->]
        (2) edge [loop left] node {$a, b$} (2)
        (2) edge node[left] {$b$} (3)
        (3) edge node[left] {$a, b$} (4)
        ;
    \end{scope}
    
    \tikzset{wstate/.style={state,rounded rectangle,text width=9mm,inner sep=1pt,align=center}}
    
    \begin{scope}[xshift=10.2cm]
        \node[wstate, accepting, initial, initial where = above] (00) {$0,\emptyset$};
        \node[wstate, accepting, below = of 00] (10) {$1,\emptyset$};
        \node[wstate, accepting, right = of 10] (s0) {$s, \emptyset$};
        \node[wstate, below = of 10,yshift=-4mm] (26) {$s, \{5\}$};
        \node[wstate, below = of 26] (27) {$s, \{6\}$};
        \node[wstate, below = of 27, accepting] (28) {$s, \{7\}$};
        \node[aname, left = of 00,yshift=6mm,xshift=9mm] (c) {$\C$}; 
         
        \path[->]
        (00) edge node[left] {$a, b$} (10)
        (10) edge [left] node[above, pos=0.5] {$b$} (s0)
        (s0) edge [loop above] node {$a, b$} (s0)
        (10) edge node[left] {$a$} (26)
        (10) edge[out=-140,in=140] node[left,pos=.7] {$a$} (27)
        (10) edge[out=-150,in=150] node[left,pos=.7,overlay] {$a$} (28)
        (26) edge node[right] {$b$} (27)
        (26) edge [loop right,looseness=5] node {$a, b$} (26)
        (27) edge node[right] {$a, b$} (28)
        ;
    \end{scope}
    \end{tikzpicture}
     
    \caption{An NFA $\A$, its front and rear components $\A_1,\A_2$,
      a~complete DFA $\A_1'$ equivalent to~$\A_1$, a~complement $\C_2$
      of $\A_2$, and the complement $\C$ of $\A$ constructed from
      $\A_1'$ and $\C_2$.}
    \label{fig:port-simple-ex}
    \vspace*{-4mm}
  \end{figure}

\cref{fig:port-simple-ex} shows sequential complementation of an automaton $\A$ built
from $\A_1$ and $\A_2$ connected by the transition $1\xrightarrow{a}2$ and illustrates that this
complementation can produce nondeterministic results.
Moreover, there exist automata for which the sequential
complementation produces complements of linear size while both forward
and reverse powerset comple\-men\-ta\-tions produce exponential
complements. Consider the language
$L_n= \{a, b\}^n.\{a\}.\{a, b\}^*.\{a\}.\{a, b\}^n$ for any
$n \in \mathbb{N}$. There exists an automaton $\B_n$ with $2n+3$
states that accepts $L_n$ (the automaton $\B_1$ is actually the
automaton $\A$ in \cref{fig:port-simple-ex}). Analogously to the figure, $\B_n$ can
be decomposed into $\B_{n1}$ and $\B_{n2}$ accepting
$L_{n1} = \{a, b\}^n$ and $L_{n2} = \{a, b\}^*.\{a\}.\{a, b\}^n$,
respectively. Complementing $\B_{n2}$ into $\C_{n2}$ via the reverse powerset and applying sequential complementation yields a complement $\C_n$ with $2n+4$ states.
In contrast, complementing $\B_n$ directly with either powerset method leads to an exponential blow-up -- both results have $2^{n+1}+n+1$~states due to the loop under $a,b$ and the nondeterminism in the middle of~$\B_n$. Moreover, $\B_n$ belongs to an NFA class for which we prove a subexponential upper bound on the sequential complement size in \ifTR\cref{sec:seq-upper-bound}\else\cite{techrep}\fi.
This upper bound is strictly better compared to forward powerset for this NFA class.

\vspace{-3mm}
\subsection{Gate Complementation} \label{sec:gate-simple}
\vspace{-1mm}

Recall that we consider an automaton $\A$ that can be seen as two disjoint
automata $\A_1=(Q_1,\Sigma,\delta_1,I_1,\{q_F\})$ and
$\A_2=(Q_2,\Sigma,\delta_2,\{q_I\},F_2)$ connected by a~transfer 
transition $q_F\xrightarrow{c}q_I$ for some $c\in\Sigma$, i.e.,
$\A=(Q_1\cup
Q_2,\Sigma,\delta_1\cup\delta_2\cup\{q_F\xrightarrow{c}q_I\},I_1,F_2)$.
Now we additionally assume that the symbol $c$ of the transfer
transition does not appear in any transition of the front component
$\A_1$. The transfer transition is then called a \emph{gate}
and $c$ is the \emph{gate symbol}. A scheme and an example of an NFA
$\A$ with a gate can be found in \cref{fig:gate-simple}.

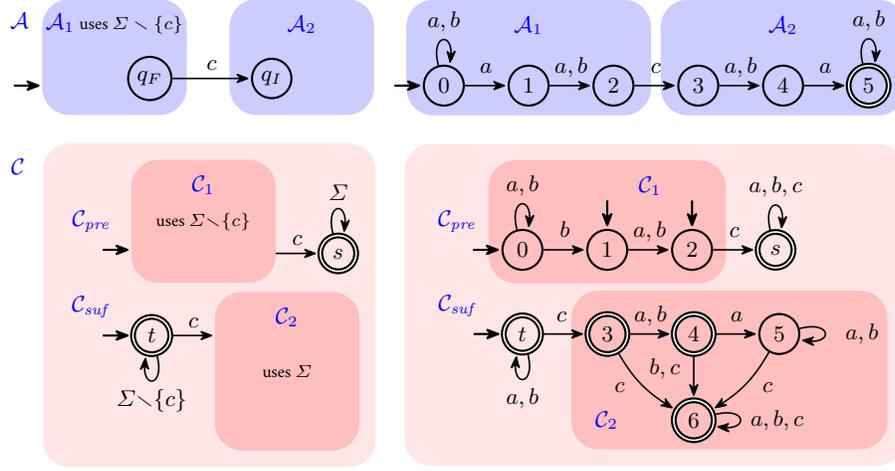
\begin{figure}[t]
    \centering

    \pgfdeclarelayer{veryback}
    \pgfsetlayers{veryback,background,main}
    \begin{tikzpicture}[scale=.95,smallautomaton,node distance=6mm,every state/.style={minimum size=5mm}]

      \tikzset{component/.style={rounded corners=10pt, fill=blue!20, draw=blue!20, draw, minimum width=2cm,semithick}}

      \tikzset{xfill/.style={rounded corners=10pt, fill=red!10, draw=red!10}}

    \begin{scope}[xshift=-1cm]
        \node[initial] (1) at (-1, 0) {};
        \node[component, anchor=west, yshift=0.4cm, minimum height=1.6cm, label={[blue,label distance =-0.6cm]90:
			$\A_1$ \textcolor{black}{\scriptsize{uses $\Sigma \smallsetminus \{c\}$}}}
		] (A1) at (1.west) {};
        \node[component, minimum height=1.6cm, label={[blue,label distance =-0.6cm]90:$\A_2$}, right = of A1] (2) {};
        \node[blue,anchor=north east,inner sep=5pt,overlay] at (A1.north west) {$\A$};

        \node[state, inner sep = 2pt, anchor=east, xshift = -0.2cm, yshift=-3mm] (p1) at (A1.east) {$q_F$};
        \node[state, inner sep = 2pt, anchor=west, xshift=0.3cm, yshift=-3mm] (p2) at (2.west) {$q_I$};
        
        \path[->, semithick]
        (p1) edge node[above] {$c$} (p2)
        ;
    \end{scope}

    \begin{scope}[xshift=0.28cm, yshift=-2.3cm, minimum width=0.88cm]
        \node[initial] (compA1) at (-0.7, 0) {};
        \node[anchor=west, component, draw=red!25, fill=red!25, minimum height=1.7cm, yshift=4mm,
			label={[blue,label distance =-0.6cm]90:$\C_1$}
		] (compA1rect) at (compA1.west) {\scriptsize{uses $\Sigma{\smallsetminus}\{c\}$}};
        \node[anchor=east,yshift=-4.5mm] (compA1end) at (compA1rect.east) {};
        \node[state, accepting, right = of compA1end] (s) {$s$};
        \node[blue, left=of compA1, yshift=4mm, xshift=5mm] (Cpre) {$\pre{\C}$};
        
        \node[state, initial, accepting, below = of compA1.west, anchor = west, yshift = -0.6cm] (t) {$t$};
        \node[blue, left=of t, yshift=4mm, xshift=5mm] (Csuf) {$\suf{\C}$};
    
        \node[right = of t] (compA2) {};
        \node[component, anchor=west,  draw=red!25, minimum height=2.2cm, yshift=-5mm, fill=red!25, inner sep=0pt,
			label={[blue,label distance =-0.6cm]90:$\C_2$}, label={[align=center]center:\scriptsize{uses $\Sigma$}}
		] (compA2rect) at (compA2.west) {};

        \begin{pgfonlayer}{background}
            \node[xfill, fit={(Cpre) (Csuf) (compA1rect) ($(compA2rect.south east)+(0mm,0)$)}, inner sep=6pt] (C) {};
        \end{pgfonlayer}{background}
        \node[blue,anchor=north east, xshift=1mm, yshift=-1mm] at (C.north west) {$\C$};
        
        \path[->, semithick]
        (t) edge node[above] {$c$} (compA2)
        (compA1end) edge node[above] {$c$} (s)
        (s) edge [loop above] node {$\Sigma$} (s)
        (t) edge [loop below] node {$\Sigma{\smallsetminus}\{c\}$} (t)
        ;
    \end{scope}

    \begin{scope}[xshift=3.5cm]
        \node[state, initial] (0) {$0$};
        \node[state, right = of 0] (1) {$1$};
        \node[state, right = of 1] (2) {$2$};
        \node[state, right = of 2] (3) {$3$};
        \node[state, right = of 3] (4) {$4$};
        \node[state, accepting, right = of 4] (5) {$5$};

        \begin{pgfonlayer}{background}
            \node[component, draw=blue!20, fill=blue!20, minimum height=1.6cm, minimum width=3.4cm, yshift=0.1cm,
				label={[blue, label distance=-0.6cm]90:$\A_1$}
			] (A1) at (1.north) {};
    
            \node[component, draw=blue!20, fill=blue!20, minimum height=1.6cm, minimum width=3.4cm, yshift=0.1cm,
				label={[blue, label distance=-0.6cm]90:$\A_2$}
			] (A2) at (4.north) {};
        \end{pgfonlayer}{background}          
        
        \path[->, semithick]
        (0) edge [loop above] node[above] {$a, b$} (0)
        (0) edge node[above] {$a\vphantom{,}$} (1)
        (1) edge node[above] {$a, b$} (2)
        (2) edge node[above] {$c\vphantom{,}$} (3)
        (3) edge node[above] {$a, b$} (4)
        (4) edge node[above] {$a\vphantom{,}$} (5)
        (5) edge [loop above] node {$a, b$} (5)
        ;
    \end{scope}

    \begin{scope}[xshift=4.6cm, yshift=-2.3cm, minimum width=1.0cm]
        \node[state, initial] (0) {$0$};
        \node[state, right = of 0, initial, initial where = above] (1) {$1$};
        \node[state, right = of 1, initial, initial where = above] (2) {$2$};
        \node[state, right = of 2, accepting] (s) {$s$};
        \node[blue, left=of 0, yshift=4mm, xshift=5mm] (Cpre) {$\pre{\C}$};

        \node[state, below = of 0, initial, accepting, yshift = 0cm] (t) {$t$};
        \node[state, right = of t, accepting] (3) {$3$};
        \node[state, right = of 3, accepting] (4) {$4$};
        \node[state, right = of 4] (5) {$5$};
        \node[state, below = of 4, accepting] (6) {$6$};
        \node[blue, left=of t, yshift=4mm, xshift=5mm] (Csuf) {$\suf{\C}$};

        \begin{pgfonlayer}{background}
            \node[component, red!25, fill=red!25, minimum height=1.7cm, minimum width=3.3cm, yshift=1mm,
				label={[blue,label distance =-0.6cm, xshift=0.6cm]90:$\C_1$}
			] (A1) at (1.north) {};
            \node[component, red!25, fill=red!25, minimum height=2.2cm, minimum width=4.4cm, xshift=5mm, yshift=-0.2cm,
				label={[blue, xshift=-0.1cm, yshift=0.65cm]225:$\C_2$}
			] (A2) at (4.south) {};
        \end{pgfonlayer}{background}
        \begin{pgfonlayer}{veryback}
            \node[xfill, fit={(Cpre) (Csuf) (A1) ($(A2.south east)+(0mm,0)$)}, inner sep=6pt] (C) {};
        \end{pgfonlayer}{veryback}
     
        \path[->, semithick]
        (0) edge [loop above] node {$a, b$} (0)
        (0) edge node[above] {$b\vphantom{,}$} (1)
        (1) edge node[above] {$a, b$} (2)
        (3) edge node[above] {$a, b$} (4)
        (4) edge node[above] {$a\vphantom{,}$} (5)
        (5) edge [bend left=15] node [below, pos=0.3, xshift=5pt] {$c\vphantom{,}$} (6)
        (4) edge node[left, pos=0.3, xshift=3pt] {$b, c$} (6)
        (3) edge [bend right=15] node [below, pos=0.3, xshift=-5pt] {$c\vphantom{,}$} (6)
        (5) edge [loop right] node {$a, b$} (5)
        (6) edge [loop right] node {$a, b, c$} (6)
        (t) edge [loop below] node {$a, b$} (t)
        (s) edge [loop above] node {$a, b, c$} (s)
        (2) edge node[above] {$c\vphantom{,}$} (s)
        (t) edge node[above, xshift=-1pt] {$c\vphantom{,}$} (3)
        ;
    \end{scope}
    \end{tikzpicture}

    \caption{A generic scheme of an NFA $\A$ with a gate (top left), a scheme of
      its complement~$\C$ (bottom left), an instance of $\A$ for language
      $\{a, b\}^*.\{a\}.\{a, b\}.\{c\}.\{a, b\}.\{a\}.\{a, b\}^*$ (top
      right), and its gate complement $\C$ (bottom right).}
    \label{fig:gate-simple}
    \vspace*{-3mm}
\end{figure}

Gate complementation utilizes the specific properties of $\A$.  The
automaton $\A$ accepts the language $L_1.\{c\}.L_2$ where
$L_1=\langof{\A_1}$ and $L_2=\langof{\A_2}$.  A word
$w \in\Sigma^*$ belongs to
$\compl{L_1.\{c\}.L_2}$ in the following cases:
\begin{enumerate}
    \item $w$ does not contain any $c$, \label{item:gate-nogsym}
    \item $w = ucv$ where $u \in (\Sigma\smallsetminus\{c\})^*$ and $u \notin L_1$, or \label{item:gate-pre}
    \item $w = ucv$ where $u \in (\Sigma\smallsetminus\{c\})^*$ and $v \notin L_2$. \label{item:gate-suf}
\end{enumerate}

Let
$\C_1 = (\elc{Q}_1, \Sigma\smallsetminus\{c\}, \elc{\delta}_1,
\elc{I}_1, \elc{F}_1)$ be an arbitrary complement of $\A_1$ with
respect to the alphabet $\Sigma\smallsetminus\{c\}$ and
$\C_2 = (\elc{Q}_2, \Sigma, \elc{\delta}_2, \elc{I}_2, \elc{F}_2)$ be
an arbitrary complement of~$\A_2$ with respect to the alphabet
$\Sigma$. Then, the complement~$\C$ of~$\A$ consists of two parts:
$\pre{\C}$~accepting the words according to Case~\ref{item:gate-pre} and
$\suf{\C}$~accepting the words according to Cases~\ref{item:gate-nogsym}
and~\ref{item:gate-suf}. Schemes of $\C$, $\pre{\C}$, and $\suf{\C}$ can be seen in
\cref{fig:gate-simple}.

Formally, we set $\C = \pre{\C} \Cup \suf{\C}$ where
$\pre{C} = (\elc{Q}_1 \cup \{s\}, \Sigma, \pre{\delta}, \elc{I}_1,
\{s\})$ and
$\suf{C} = (\elc{Q}_2 \cup \{t\}, \Sigma, \suf{\delta}, \{t\}, \{t\}
\cup \elc{F}_2)$ such that $s,t$ are fresh states,
\begin{align*}
    \pre{\delta} &= \elc{\delta}_1 \cup \{p \xrightarrow{c} s \mid p \in \elc{F}_1\} \cup \{s \xrightarrow{a} s \mid a \in \Sigma\} \text{, and } \\
    \suf{\delta} &= \elc{\delta}_2 \cup \{t \xrightarrow{c} p \mid p \in \elc{I}_2\} \cup \{t \xrightarrow{a} t \mid a \in \Sigma\smallsetminus\{c\}\}.
\end{align*}


\begin{theorem}\label{thm:gate-basic}
  The NFA $\C$ accepts $\compl{\langof{\A}}$.
\end{theorem}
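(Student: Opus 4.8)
The plan is to prove both language inclusions at once by showing that $\langof{\pre{\C}}$ captures exactly Case~\ref{item:gate-pre} and that $\langof{\suf{\C}}$ captures exactly Cases~\ref{item:gate-nogsym} and~\ref{item:gate-suf}, and then to invoke the definition of the union automaton, which yields $\langof{\C} = \langof{\pre{\C}} \cup \langof{\suf{\C}}$. Since the three cases together were already claimed to describe $\compl{\langof{\A}}$, this closes the proof; but before anything else I would justify that claim, as it is where the gate hypothesis does the real work.

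First I would argue that Cases~\ref{item:gate-nogsym}--\ref{item:gate-suf} indeed exhaust $\compl{\langof{\A}}$. The crucial structural fact is that the gate symbol $c$ occurs in no transition of $\A_1$, hence $L_1 \subseteq (\Sigma\smallsetminus\{c\})^*$. Consequently every word $w \in \langof{\A} = L_1\concat\{c\}\concat L_2$ contains at least one $c$, and if we split $w = ucv$ so that $u$ is its maximal $c$-free prefix, then $u$ is forced to be the portion of $w$ preceding its \emph{first} $c$, and membership $w \in \langof{\A}$ is equivalent to $u \in L_1$ \emph{and} $v \in L_2$. Negating this characterization gives precisely: $w$ contains no $c$ (Case~\ref{item:gate-nogsym}), or $w = ucv$ with $u$ $c$-free and $u \notin L_1$ (Case~\ref{item:gate-pre}), or $w = ucv$ with $u$ $c$-free and $v \notin L_2$ (Case~\ref{item:gate-suf}).

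Next I would analyze the two components. For $\pre{\C}$: because $\C_1$ is a complement with respect to $\Sigma\smallsetminus\{c\}$, its transition relation $\elc{\delta}_1$ contains no $c$-transitions, so any run staying inside $\elc{Q}_1$ reads only $c$-free words, and the only way to leave $\elc{Q}_1$ is $p\xrightarrow{c}s$ from an accepting state $p\in\elc{F}_1$. As $s$ is the unique accepting state and a universal self-looping sink, every accepting run reads a $c$-free prefix $u$ ending in $\elc{F}_1$ (so $u\in\langof{\C_1} = (\Sigma\smallsetminus\{c\})^*\smallsetminus L_1$), then a single $c$, then an arbitrary suffix; thus $\langof{\pre{\C}}$ is exactly Case~\ref{item:gate-pre}. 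For $\suf{\C}$: its initial (and accepting) state $t$ has self-loops under every symbol except $c$, and the only transition leaving $t$ is $t\xrightarrow{c}\elc{I}_2$, so a run cannot cross a $c$ while remaining at $t$. Hence such a run either stays at $t$ over a $c$-free word (accepted, Case~\ref{item:gate-nogsym}) or reads the maximal $c$-free prefix $u$ at $t$, transfers at the first $c$ into $\C_2$, and simulates $\C_2$ on the suffix $v$, accepting iff $v\in\langof{\C_2}=\Sigma^*\smallsetminus L_2$ (Case~\ref{item:gate-suf}). Therefore $\langof{\suf{\C}}$ is exactly Cases~\ref{item:gate-nogsym} and~\ref{item:gate-suf}, and taking the union with $\langof{\pre{\C}}$ gives $\compl{\langof{\A}}$.

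The main obstacle is the first step together with the alignment it demands from both components: I must argue that the splitting of $w$ at its first $c$ is canonical, and that both $\pre{\C}$ and $\suf{\C}$ are forced to transfer at exactly that $c$. This is where the gate assumption is indispensable --- the absence of $c$ in $\A_1$ makes the separator in any accepted word unambiguously the first $c$, which is matched on the complement side by the absence of $c$-transitions in $\C_1$ and the absence of a $c$-self-loop on $t$. Everything else is a routine verification that the self-loops on $s$ and $t$ and the fresh transfer edges behave as intended.
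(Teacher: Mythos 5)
Your proposal is correct and follows essentially the same route as the paper: it uses the three-case characterization of $\compl{L_1\concat\{c\}\concat L_2}$ and shows that $\pre{\C}$ captures Case~\ref{item:gate-pre} while $\suf{\C}$ captures Cases~\ref{item:gate-nogsym} and~\ref{item:gate-suf}. The only (welcome) difference is presentational: you make explicit the canonicity of splitting at the first $c$ and prove exact language equalities for the two parts, whereas the paper argues the two inclusions directly on runs and leaves the case characterization as an unproved observation.
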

\begin{proof}[Proof (sketch)]
  Recall that $\langof{A}=L_1.\{c\}.L_2$. First, consider
  $w\in\langof{\A}$. Then $w = ucv$, where $u \in L_1$ and
  $v \in L_2$. Because $u \in L_1$, no run of $\C_1$ over $u$ reaches
  $\elc{F}_1$. Hence, no run of~$\pre{\C}$ over a word starting with
  $uc$ reaches $s$ and thus $w\not\in\langof{\pre{\C}}$. The run of
  $\suf{\C}$ over $uc$ reaches $\elc{I}_2$ and it cannot be prolonged
  into an accepting run over $w$ as $\C_2$ does not accept~$v$.
  Altogether, we get $w\not\in\langof{\pre{\C}}\cup\langof{\suf{\C}}=\langof{\C}$.

  Now assume that $w\not\in\langof{\A}$. If $w$ does not contain any
  $c$, it is accepted by $\suf{\C}$. Let $w=ucv$, where
  $u\in(\Sigma\smallsetminus\{c\})^*$. As $w\not\in\langof{\A}$, we
  know that $u\not\in L_1$ or $v\not\in L_2$. In the former case,
  $u\in\langof{\C_1}$ and thus $w$ is accepted by $\pre{\C}$. In the
  latter case, $v\in\langof{\C_2}$ and thus $w$ is accepted by $\suf{\C}$.
  To sum up, $w\in\langof{\pre{\C}}\cup\langof{\suf{\C}}=\langof{\C}$.
\end{proof}



Note that $\C$ has only $|\C_1|+|\C_2|+2$ states. To see the advantage
of this complementation approach, consider
$L_n=L_{n1}.\{c\}.L_{n2}$, where
$L_{n1} = \{a, b\}^*.\{a\}.\{a, b\}^n$,
$L_{n2} = \{a, b\}^n .\{a\}.\{a, b\}^*$, and $n \in \mathbb{N}$.
There exists an NFA $\B_n$ with only $2n+4$ states accepting~$L_n$, which can be deconstructed into two components accepting $L_{n1}$ and $L_{n2}$, respectively
(the automaton $\B_1$ is actually the automaton $\A$ in
\cref{fig:gate-simple}). If we use reverse powerset to complement the front component, forward powerset for the rear one, and combine the outputs by gate
complementation, the result $\C_n$ has $2n+7$ states. Complementing the whole NFA $\B_n$ with either forward or
reverse powerset causes an exponential blow-up, with both results having $2^{n+1} + n + 2$ states. Sequential complementation (\cref{sec:seq-simple}) also results in
a~blow-up (for all possible divisions of~$\B_n$
into front and rear components) due to the determinization of the
front component and/or tracking possibly many instances of the
complement of the rear component.



\vspace{-3.0mm}
\section{Generalized Complementation Problem}\label{sec:generalized-compl} 
\vspace{-2.0mm}

This section briefly generalizes the ideas from
\cref{sec:port-simple}. The generalized complementation constructions
are still applicable to automata consisting of two components, but
there can be an arbitrary number of transfer transitions under various
symbols leading from the front to the rear component. The generalized
constructions again use complements of the components. These
complements can be constructed either by forward or reverse powerset
complementation, or by a recursive application of (generalized)
sequential or gate complementation.

Due to the potentially recursive application of our complementation
constructions and due to the fact that components can be connected by
multiple transfer transitions, we need to complement components with
many incoming and outgoing transfer transitions. Therefore, we work
with automata that generalize initial and final states to multiple
sets of entry and exit states. These sets are called \emph{entry} and
\emph{exit port sets}. We talk about \emph{port automata} and
generalize the complementation problem to these automata as follows.




A~\emph{port NFA} or simply a \emph{port automaton} is a tuple
$\A = (Q, \Sigma, \delta, \inportsets, \outportsets)$, where $Q$,
$\Sigma$, and~$\delta$ are as in an NFA, and
$\inportsets = (I_0,\ldots,I_k)$ and $\outportsets = (F_0,\ldots,F_{\ell})$
are sequences of subsets of $Q$ called \emph{entry port sets},
resp.~\emph{exit port sets}.  A~\emph{slice} of $\A$ is an NFA with
one entry and one exit port set chosen as the initial, resp.\ final
states, i.e., the NFA
$\autij{\A}{i}{j} = (Q, \Sigma, \delta, I_i, F_j)$ for $0 \leq i \leq k$ and
$0 \leq j \leq \ell$.  $\A$~is \emph{deterministic (port DFA)} if all its
slices are deterministic (in particular, $|I_i| = 1$ must hold for every $0 \leq i \leq k$).
A port DFA is \emph{complete} if all its slices are complete.
A~\emph{complement} of $\A$ is a port NFA representing
complements of all slices. More precisely, a~complement of $\A$ is a port NFA
$\C = (\elc{Q}, \Sigma, \elc{\delta}, \elc{\inportsets},
\elc{\outportsets})$ with
$\elc{\inportsets} = (\elc{I}_0,\ldots,\elc{I}_k)$ and
$\elc{\outportsets} = (\elc{F}_1,\ldots,\elc{F}_{\ell})$ such that 
$\langof{ \autij{\C}{i}{j} } = \compl{ \langof{ \autij{\A}{i}{j} }}$
for each $0 \leq i \leq k$ and $0 \leq j \leq \ell$. We call~$\elc{I}_i$
an \emph{entry port complement} of $I_i$ and $\elc{F}_j$
an \emph{exit port complement} of $F_j$, together shortened to
\emph{port complements}. 

In the rest of this section, we first generalize the powerset
construction to port automata to get a determinization procedure
needed in the sequential complementation. With that, we generalize the
forward and reverse powerset complementation to port automata. Powerset
complementations are applied to components that cannot be recursively
complemented by sequential or gate complementation (for example,
because they cannot be further decomposed into a front and a rear
component). Finally, we outline general versions of sequential and
gate complementations.

\vspace{-0.0mm}
\subsection{Powerset Construction and Complementation for Port Automata} \label{sec:extended-powerset}
\vspace{-0.0mm}

%


The powerset construction and complementation generalize to
port automata as follows.  Given a port NFA
$\A = (Q, \Sigma, \delta, \inportsets, \outportsets)$ with
$\inportsets = (I_i)_{0\leq i \leq k}$ and $\outportsets= (F_j)_{0\leq j \leq
  \ell}$, the \emph{port powerset construction} produces a port DFA
$\det{\A} = (Q', \Sigma, \delta', \inportsets', \outportsets')$, where
$Q'$ and $\delta'$ are defined as in the standard powerset construction
(see \cref{sec:forward-powerset}),
$\inportsets' \!=\! (\{I_i\})_{0\leq {i}\leq k}$, and $\outportsets'\! = \!(\{P \!\in
\! Q' \mid P \!\cap\! F_j \!\neq\! \emptyset\})_{0\leq j \leq \ell}$.
%
The original $\A$ and $\det{\A}$ are equivalent, i.e., $\langof{\autij{\A}{i}{j}} = \langof{\autij{\det{\A}}{i}{j}}$ for all $0\leq i \leq k$ \mbox{and $0\leq j \leq \ell$.}
Moreover, $\det{\A}$ is \emph{complete}.
The \emph{complement} of any complete port DFA
$\D = (Q', \Sigma, \delta', \inportsets', \outportsets')$ with $\outportsets'= (F'_j)_{0\leq j \leq
  \ell}$ is the port
DFA $\co{\D} = (Q', \Sigma, \delta', \inportsets', \outportsets'')$
where $\outportsets''=(Q'\smallsetminus F'_j)_{0\leq j \leq \ell}$.
%
The forward powerset complement of a port NFA is, as for non-port NFAs, defined as $\co{\det{\A}}$.
The \emph{reverse} of a port NFA
$\A$ is the port NFA
$\rev{\A}=(Q, \Sigma, \rev{\delta}, \outportsets,\inportsets)$.
The reverse powerset complement is then
constructed as $\rev{\co{\det{\rev{\A}}}}$.


\vspace{-0mm}
\subsection{Generalized Sequential Complementation} \label{sec:port-general-outline}
\vspace{-0mm}

We now outline the generalization of sequential complementation to
port NFAs, highlighting only the differences from
\cref{sec:seq-simple} (see \ifTR\cref{sec:port-general}\else\cite{techrep}\fi{}
for details).
Let $\A$ be a component constructed by merging $\A_1$ and
$\A_2$. Hence, $\A$~is a port NFA consisting of the front component $\A_1$ and the rear component $\A_2$, both port NFAs, 
connected by a~set of transfer transitions $\gates$ leading from $\A_1$ to  $\A_2$. 
Unlike \cref{sec:seq-simple}, we have a~set of transfer transitions $\gates$ instead of a~single one, and both components have multiple entry/exit port sets. 
We first construct $\det{\A_1}$ and a~complement port NFA~$\C_2$ of~$\A_2$.
Analogously to \cref{sec:seq-simple}, the constructed complement~$\C$ of~$\A$ contains states $(q,R)$, where $q$ tracks the only run of 
$\det{\A_1}$ and $R$ tracks runs of $\C_2$ over suffixes of the input word. 

The generalized construction closely follows the basic one. It differs in the following: 
\begin{enumerate}
\item Given a state $q$ of $\det{\A_1}$ and a symbol $a\in\Sigma$, let
  $P$ be the set of states $p$ of $\A_2$ such that $\A$ contains a
  transfer transition $q''\xrightarrow{a} p$ from some $q''\in q$.
  Whenever $\C$ reaches a state $(q, R)$ with $a$ next on input, 
  it has to check that the rest of the word is not accepted by $\A_2$ starting from any
  $p\in P$ and thus it spawns a new instance of $\C_2$ for each
  $p\in P$. Formally, $\C$ has all transitions
  $(q, R)\xrightarrow{a}(q',R')$ where (1)~$q \xrightarrow{a}q'$ is
  the transition from $q$ under $a$ in $\det{\A_1}$, (2)~for each
  $r\in R$, $R'$ contains some $s$ such that $r \xrightarrow{a}s$ is a
  transition in $\C_2$, and, additionally, (3)~for every $p \in P$,
  $R'$ contains some state $s$ from the port complement of the newly
  added entry port set $\{p\}$. 

\item
  Since $\A_1$ may have exit ports with transitions leaving $\A$ and
  $\A_2$ may have entry ports with incoming edges from outside, $\C$
  must reject words accepted entirely within either $\A_1$ or
  $\A_2$. Therefore, if a slice $\autij{\A}{i}{j}$ has entry ports in
  $\A_2$, we activate one instance of $\autij{\C_2}{i}{j}$ at the
  start: $\autij{\C}{i}{j}$ has entry ports of the form $(q,\{r\})$,
  where $q$ is the only entry port in $\autij{\det{\A_1}}{i}{j}$ and $r$ is an entry port in $\autij{\C_2}{i}{j}$. 
  At the same
  time, a state $(q, R)$ is an exit port of $\autij{\C}{i}{j}$ only if
  $q$ is not an exit port of $\autij{\det{\A_1}}{i}{j}$, ensuring the input
  word was not accepted in $\autij{\A_1}{i}{j}$.
\end{enumerate}

\vspace{-2mm}
\subsection{Generalized Gate Complementation}\label{sec:gate-general-outline}
\vspace{-1mm}

\begin{figure}[t]
    \centering
    \begin{tikzpicture}[smallautomaton,node distance=8mm,every state/.style={minimum size=5mm}]

    \tikzset{component/.style={rounded corners=10pt, draw=blue!20, fill=blue!20, draw, minimum width=1.5cm, minimum height=1.6cm, semithick}}

    \tikzset{xfill/.style={rounded corners=10pt, fill=red!10}}

    \begin{scope}
        \node[initial, component, fill=red!25, draw=red!25, label={[blue, label distance =-0.5cm]90:$\C_1$}] (compA1) {};
        \node[state, accepting, double=red!25, anchor=east, xshift = -3mm] (fin1) at (compA1.east) {};
        \node[blue, left=of compA1, yshift=4mm, xshift=5mm] (Cpre) {$\pre{\C}$};
        
        \node[state, accepting, right = of compA1] (s) {$s$};
        
        \node[anchor=south] (A1sym) at (compA1.south) {$\Sigma \smallsetminus \gsym$};
        
        \node[coordinate, yshift=4mm] (7) at (compA1.east) {};
        \node[coordinate] (8) at (compA1.east) {};
        \node[coordinate, yshift=-4mm] (9) at (compA1.east) {};
        
        \path[->]
        (7) edge node[above] {$\gsym$} (s)
        (8) edge node {} (s)
        (9) edge node {} (s)
        (s) edge [loop above] node {$\Sigma$} (s)
        ;

        \node[state, initial, below = of compA1.west, anchor = west, yshift = -1cm] (t) {$t$};
        \node[component, fill=red!25, draw=red!25, right = of t, label={[blue, label distance =-0.5cm]90:$\C_2$}] (compA2) {};
        \node[state, accepting, double=red!20, anchor=east, xshift = -3mm] (fin2) at (compA2.east) {};

        \node[blue, left=of t, yshift=4mm, xshift=5mm] (Cpre) {$\suf{\C}$};
        \node[anchor=south] (A2sym) at (compA2.south) {$\Sigma$};
        
        \node[coordinate, yshift=4mm] (4) at (compA2.west) {};
        \node[coordinate] (5) at (compA2.west) {};
        \node[coordinate, yshift=-4mm] (6) at (compA2.west) {};
        
        \path[->]
        (t) edge node {} (4)
        (t) edge node {} (5)
        (t) edge node [below] {$\gsym$} (6)
        (t) edge [loop below] node {$\Sigma \smallsetminus \gsym$} (s)
        ;

        \begin{pgfonlayer}{background}
            \node[xfill, fit = (compA1) (s) (A1sym) (compA2) (t) (A2sym), fit margins={left=18pt,right=3pt,bottom=3pt,top=3pt}] (C) {};
        \end{pgfonlayer}{background}
        \node[anchor=south, rotate=90] at (C.west) {\methodEqual: {\color{blue}$\C$}};

    \end{scope}

    \begin{scope}[xshift=6cm]
        \node[initial, component, fill=red!25, draw=red!25, label={[blue, label distance =-0.5cm]90:$\C_1$}] (compA1) {};
        \node[state, accepting, double=red!25, anchor=east, xshift = -3mm] (fin1) at (compA1.east) {};
        \node[blue, left=of compA1, yshift=4mm, xshift=5mm] (Cpre) {$\pre{\C}$};
        
        \node[state, accepting, right = of compA1, xshift=3mm] (s) {$s$};
        
        \node[anchor=south] (A1sym) at (compA1.south) {$\Sigma \smallsetminus \gsym$};
        
        \node[coordinate, yshift=4mm] (7) at (compA1.east) {};
        \node[coordinate] (8) at (compA1.east) {};
        \node[coordinate, yshift=-4mm] (9) at (compA1.east) {};
        
        \path[->]
        (7) edge node[above] {$\gsym$} (s)
        (8) edge node {} (s)
        (9) edge node {} (s)
        (s) edge [loop above] node {$\Sigma$} (s)
        ;

        \node[initial, draw=blue!20, component, below = of compA1, yshift=6mm, label={[blue, label distance =-0.5cm]90:$\A_1$}] (A1) {};
        \node[component, fill=red!25, draw=red!25, right = of A1, label={[blue, label distance =-0.5cm]90:$\C_2$}] (compA2) {};
        \node[state, accepting, double=red!20, anchor=east, xshift = -3mm] (fin2) at (compA2.east) {};
        \node[blue, left=of A1, yshift=4mm, xshift=5mm] (Cpre) {$\suf{\C}$};
        
        \node[anchor=south] (A2sym) at (compA2.south) {$\Sigma$};
        \node[anchor=south] (A2sym) at (A1.south) {$\Sigma \smallsetminus \gsym$};
    
        \node[coordinate, yshift=4mm] (1) at (A1.east) {};
        \node[coordinate] (2) at (A1.east) {};
        \node[coordinate, yshift=-4mm] (3) at (A1.east) {};
        \node[coordinate, yshift=4mm] (4) at (compA2.west) {};
        \node[coordinate] (5) at (compA2.west) {};
        \node[coordinate, yshift=-4mm] (6) at (compA2.west) {};
         
        \path[->]
        (1) edge node {} (4)
        (2) edge node {} (5)
        (3) edge node [below] {$\gsym$} (6)
        ;

        \begin{pgfonlayer}{background}
            \node[xfill, fit = (compA1) (s) (A1sym) (compA2) (A1), fit margins={left=18pt,right=3pt,bottom=3pt,top=3pt}] (C) {};
        \end{pgfonlayer}{background}
        \node[anchor=south,xshift=-1mm,rotate=90] at (C.west) {\methodDisjoint: {\color{blue}$\C$}};
    \end{scope}

    \end{tikzpicture}
    \caption{Gate complement using the \methodEqual (left) and \methodDisjoint method (right).}
    \label{fig:gate-comp-general}
\vspace{-5mm}
\end{figure}

Here we outline the core ideas of generalized gate complementation as
an extension of \cref{sec:gate-simple}, with full details in
\ifTR\cref{sec:gate-general}\else\cite{techrep}\fi. Recall that in \cref{sec:gate-simple}, the
NFA $\A$ was assumed to consist of two components connected by a
single gate transition labeled with a gate symbol not occurring in the
front component.  We now generalize this by allowing arbitrary
transitions leading from the front to the rear component and labeled with gate symbols $\Gamma\subseteq\Sigma$ such that $\A_1$ has no transitions under $\Gamma$.
We call a gate transition labeled with $c$ a $c$-gate. We also suppose that
the input NFA may be a port NFA.
We present two variants of the algorithm, \methodEqual and
\methodDisjoint (illustrated in \cref{fig:gate-comp-general}), each adding further constraints on the input NFA.

The simpler variant \methodEqual more closely resembles
\cref{sec:gate-simple}. 
If we write the words read by $\A$ as $ucv$ with $u \in (\Sigma \smallsetminus \gsym)^*$, $c \in \gsym$, and $v \in \Sigma^*$, this variant
assumes that for every gate symbol
$c \in \gsym$, every two entry ports of $\A_2$ with an 
incoming $c$-gate can be reached by reading exactly the same 
prefixes of the form $uc$.
%
To recognize all prefixes that cannot be read in
$\A_1$ and followed by $c$, we collect the ports of $\A_1$ with outgoing $c$-gates into a new exit port set. We do the same for
the suffixes that are not accepted in $\A_2$ from states with an incoming $c$-gate, and give~$\A_2$ a~new entry port set
consisting of those states. These new port
sets (one for each gate symbol) are preserved through complementation,
and in $\pre\C$ and $\suf\C$, we can connect their complement port
sets in~$\C_1$ and~$\C_2$ to the states $s$ and $t$ using an
appropriate gate symbol, respectively.
This way, $\pre\C$ accepts words where the first-appearing gate symbol $c$ follows an invalid prefix
and $\suf\C$ accepts words where $c$ is followed by an invalid suffix.



The \methodDisjoint variant relaxes the requirement that all $\A_2$'s entry ports with an incoming $c$-gate must be reached by the same words of the form $uc$. 
Instead, $\suf{\C}$ tracks used gates, assuming the complemented port
NFA is partitioned as follows: for any $c \in \gsym$ and any two gates $p \xrightarrow{c} r$, $p' \xrightarrow{c} r'$, if the languages accepted by $p$ and $p'$ in any slice
$\autij{\A_1}{i}{j}$ are not disjoint, then the languages
accepted from $r$ and $r'$ in the slice $\autij{\A_2}{i}{j}$ must be equal. 
Unlike the \methodEqual method, $\suf{\C}$ now depends on both $\A_1$ and $\C_2$ to \mbox{track the used gate.}


\vspace{-3.0mm}
\section{Implementation and Evaluation}\label{sec:experiments}
\vspace{-2.0mm}

We have implemented the described algorithms in a tool called
\aligater\footnote{\blinded{\url{https://gitlab.fi.muni.cz/xstepkov/aligater}}}
written in Python and using the C++ library
\mata~\cite{Mata} and the Python library \texttt{Automata}~\cite{Automatalib} as
backends. \aligater~also integrates the \rabit tool~\cite{MayrC13} for NFA reduction.

\aligater~calls \mata~for forward (\fwdpws) and reverse (\revpws) powerset complementation, and Hopcroft's
minimization~\cite{Hopcroft71,ValmariL08} (denoted by the suffix \plusmin).
Sequential (\sequential) and gate (\gate) complementations are implemented in \aligater in their
generalized versions, described in \cref{sec:port-general-outline,sec:gate-general-outline}.
Selected implementation details for both of these methods are discussed below.
Other details (e.g., the use of reductions) and
an extended description of \aligater settings used in the evaluation are available
in \ifTR\cref{sec:impl-eval-settings}\else\cite{techrep}\fi.

\vspace{-0.0mm}
\subsection{Implementation of Sequential Complementation}\label{sec:impl-seq}
\vspace{-0.0mm}


The implementation of sequential complementation first divides the NFA $\A$ into components
$\A_1, \ldots, \A_n$.  Then $\A_n$ is complemented using forward or
reverse powerset complementation, and $\A_1, \ldots, \A_{n-1}$ are
determinized.
The automaton~$\det{\A_{n-1}}$ is then composed with the complement
of~$\A_n$ to form a new complemented bottom component;
the same
process is repeated with every preceding component up to~$\A_1$.

We implemented three approaches to divide $\A$ into components:
\begin{enumerate}
\item \emph{Deterministic components}. Because the components
  $\A_1, \ldots, \A_{n-1}$ are determinized during sequential
  complementation, this partitioning approach tries to avoid
  determinization and use transfer transitions to cover some
  nondeterminism in $\A$. We first decompose $\A$ into SCCs and order
  them by topological ordering. If the first SCC is nondeterministic,
  we set $\A_1$ to be this SCC. If it is deterministic, we set $\A_1$
  to be the first SCC and we repeatedly add previously unused SCCs in the order of the topological ordering (with
  the corresponding transfer transitions) until a maximal
  deterministic $\A_1$ is obtained. Note that $\A_1$ may not be connected. The rest of the automaton $\A$ is then recursively divided in the same
  manner.
\item \emph{Deterministic components + reverse-deterministic bottom
    component}. Sequential complementation requires~$\A_n$ to be
  complemented, and if it is reverse-deterministic, we can complement it
  easily by reverse powerset complementation.
  We compute $\A_n$ as a maximal
  bottom reverse-deterministic part of $\A$, analogously to the
  computation of $\A_1$ in the previous approach. The rest of $\A$ is
  divided in the same way as above.
\item \emph{Min-cut}. $\A$ is divided into two components, where the
  partition with the fewest transfer transitions from $\A_1$ to $\A_2$
  is chosen.
  To obtain this partition, we construct a directed graph whose vertices are the SCCs of $\A$. 
  The capacity of each edge is the amount of transitions between the SCCs. 
  The function \texttt{minimum\_cut()} from the \networkx  library \cite{NetworkX} is then used to compute the minimum cut of this graph.
  The source is a fresh vertex with edges to all SCCs with no predecessors, the sink is the last SCC in a topological ordering.
\end{enumerate}

\vspace{-0.0mm}
\subsection{Implementation of Gate Complementation} \label{sec:impl-gate}
\vspace{-0.0mm}

\aligater implements the generalized gate complementation methods.
The implementation tries to find all possible partitions of the input NFA $\A$
satisfying the input conditions of any of the methods presented in
\ifTR\cref{sec:gate-general}\else \cref{sec:gate-general-outline}
(cf.~\cite{techrep} for more details)\fi.
The input conditions are formulated as the language equivalence or disjointness of
certain states and evaluated by
\mata; language equivalence is tested using the \emph{antichains algorithm}
\cite{WulfDHR06}, language disjointness is checked via an emptiness check on the product automaton.
The concrete details can be found in
\ifTR{}\cref{sec:impl-gate-details}\else{}\cite{techrep}\fi, but the main idea
is to prefer partitions tagged with \methodEqual to those tagged with
\methodDisjoint, since they usually deliver smaller results, and that we try to
pick a partition where the two components have a~similar size.
%

\newcommand{\figScplotFwdRev}[0]{
\begin{figure}[t]
\begin{minipage}{.5\textwidth}
\includegraphics[width=\textwidth]{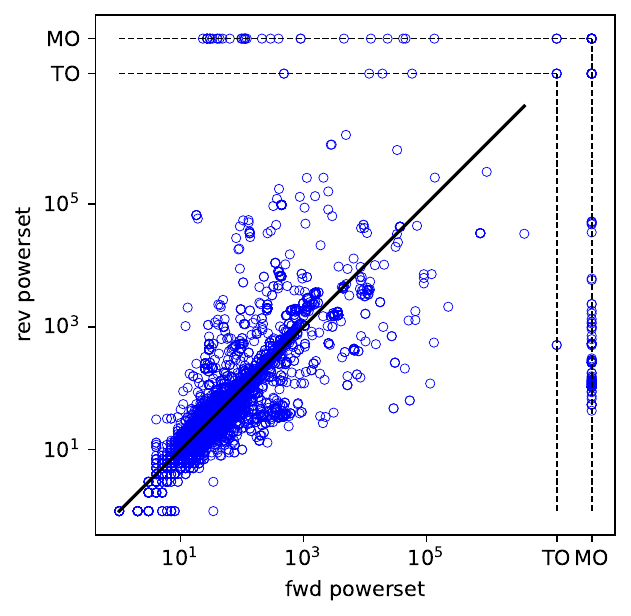}
\end{minipage}
\begin{minipage}{.5\textwidth}
\includegraphics[width=\textwidth]{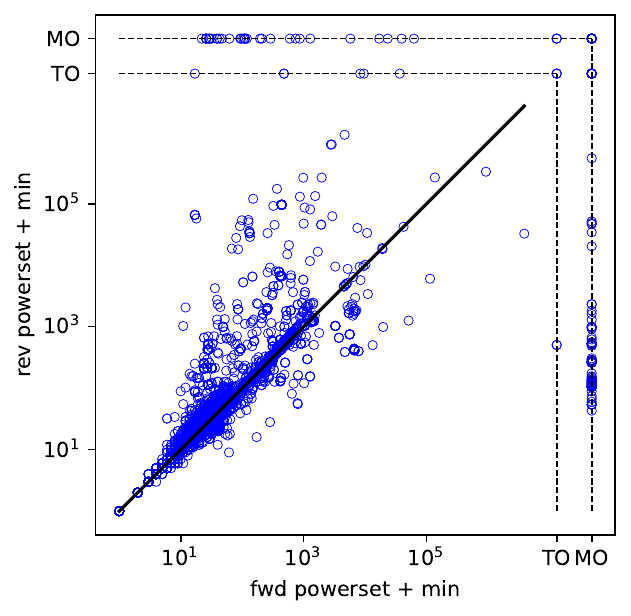}
\end{minipage}
\caption{Comparison of sizes of complements (number of states) generated by 
\fwdpws and \revpws
without (left) or with (right) minimization;
axes are logarithmic.}\label{fig:scplot-fwd-x-rev}
\end{figure}
}

\newcommand{\figScplotSeqGateAtp}[0]{
\begin{figure}[t]
\centering
\begin{minipage}{.4\textwidth}
\includegraphics[width=\textwidth]{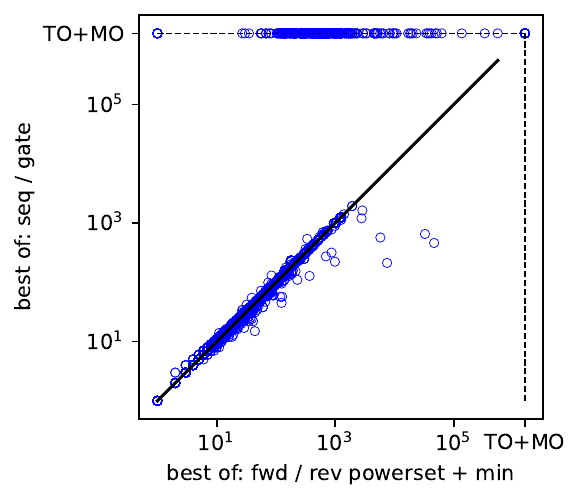}
\end{minipage}
\begin{minipage}{.4\textwidth}
\includegraphics[width=\textwidth]{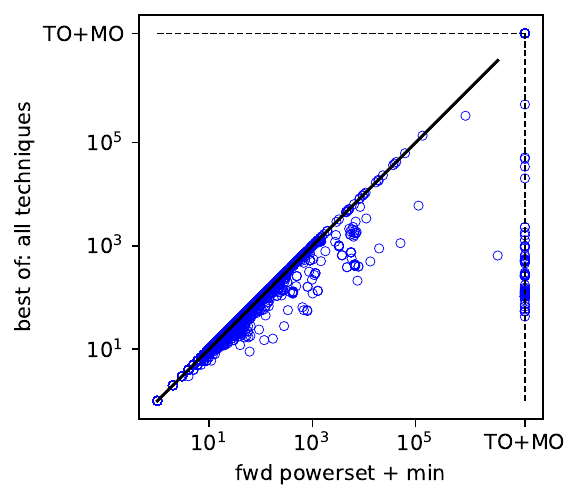}
\end{minipage}
\caption{Comparison of sizes of complements (number of states) given by the
  best of \fwdpws\plusmin and \revpws\plusmin against the best of \sequential and
  \gate (left), and \fwdpws\plusmin against the best result of all techniques (right).
  }\label{fig:scplot-hopcroft-fwdrev-x-seqgate-all}
\vspace{-5mm}
\end{figure}
}

\newcommand{\tabTimeouts}[0]{
\begin{table}[tb]
    \centering
    \caption{The numbers of automata in our benchmark set where the computation
    ran out of resources in relation to the values of $\detsuccsname$.}
    \label{table:heu-blowup}
    \begin{tabular}{crrrr}
    \toprule
     & \multicolumn{2}{c}{\fwdpws} & \multicolumn{2}{c}{\revpws} \\ \cmidrule(lr){2-3}\cmidrule(lr){4-5}
     & ~\quad TO & MO\quad~ & ~\quad TO & MO\quad~ \\
	 \midrule
    $\detsuccs{\A} < \detsuccs{\rev{\A}}$ & 5 & 30\quad~ & 20 & 43\quad~ \\
    $\detsuccs{\A} = \detsuccs{\rev{\A}}$ & 3 & 25\quad~ & 1 & 5\quad~ \\
    $\detsuccs{\A} > \detsuccs{\rev{\A}}$ & 1 & 67\quad~ & 5 & 3\quad~ \\
	\bottomrule
    \end{tabular}
\end{table}
}

\newcommand{\figCactus}[0]{
\begin{figure}[tb]
\includegraphics[width=\textwidth]{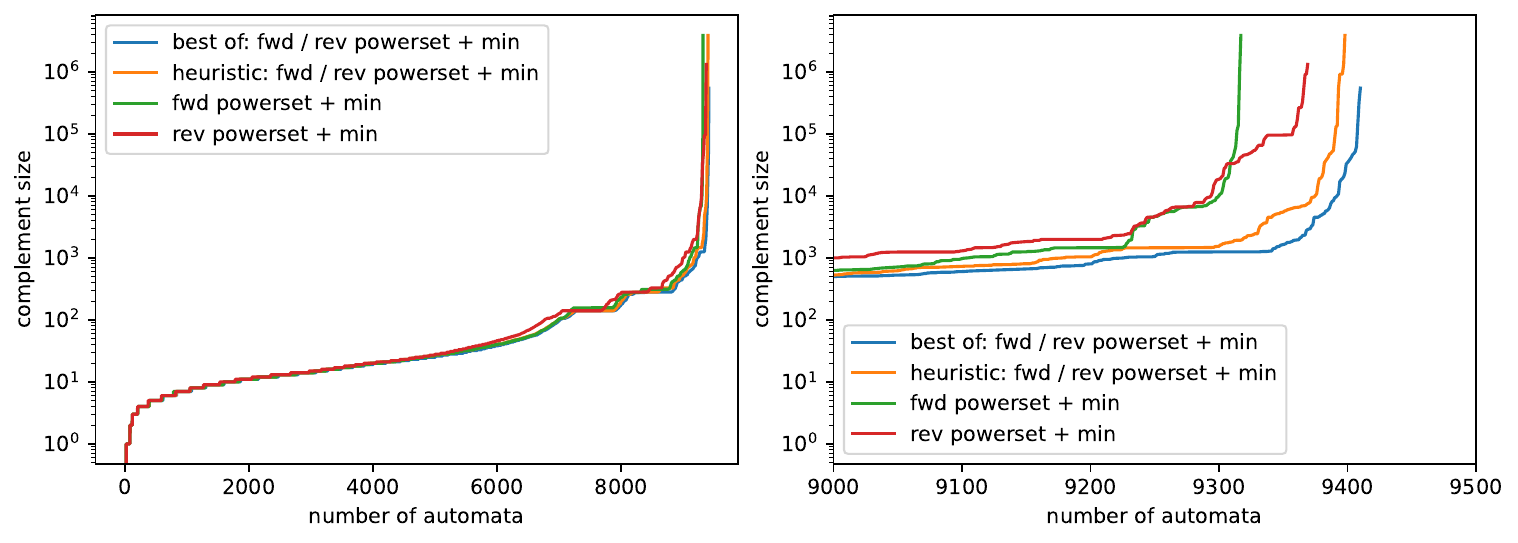}
\caption{A cactus plot comparing the performance of the heuristic with the
individual powerset methods (left), and a zoom into the tail of the plot (right).}
\label{fig:heuristic}
\end{figure}
}

\vspace{-0.0mm}
\subsection{Evaluation}\label{sec:label}
\vspace{-0.0mm}

All experiments were run on computers with the Intel\textregistered{}
Core\texttrademark{} i7-8700 CPU. Each complementation algorithm was
executed on each input NFA with the timeout of 5\,min and the memory
limit of 8\,GiB. We focus on the size (number of states) of the
results and also how often the individual methods were successful
(finished within the time and memory limits). TO and MO means that the
time or memory limit was reached, respectively.

For the evaluation, 
we used a~total of 9,450
benchmarks from \NfaBench~\cite{NfaBench}, which gathers automata benchmarks from diverse applications.
Additional details regarding the families of benchmarks are
in \ifTR\cref{sec:impl-benchmarks}\else\cite{techrep}\fi.

\textbf{Results.} 
We evaluated
the performance of the proposed algorithms \sequential and \gate
compared to the best result of \fwdpws\plusmin and \revpws\plusmin
(\cref{fig:scplot-hopcroft-fwdrev-x-seqgate-all}, left).  The methods
\sequential and \gate are, in general, computationally more intensive
than the powerset constructions.  Running \sequential often produces
large automata unless reduced during the process.  The automata
structure, however, allows \rabit~to effectively reduce them, unlike
automata generated by powerset constructions.  Applicability of \gate
is limited by input conditions (it produced results for 2,577
benchmarks) and evaluating these conditions can be costly.  Despite
frequent timeouts, these methods sometimes achieve significantly
better results than powerset-based methods, suggesting further room
for improvement in NFA complementation beyond powerset-based
techniques.

We have also evaluated the benefit of using all available techniques (e.g., in
a~portfolio) against \fwdpws\plusmin as the baseline method
(\cref{fig:scplot-hopcroft-fwdrev-x-seqgate-all}, right).
This use case is targeted at applications where it pays off to obtain as small
automaton for the complement as possible, such as when translating an extended
regex into an NFA that will be used millions of times during matching.
The results show that the proposed 
techniques were in many cases able to bring significant benefits, in particular
solving a~number of cases when \fwdpws\plusmin ran out of resources.

\figScplotSeqGateAtp   


\vspace{-3mm}
\section{Conclusion}
\vspace{-2mm}

We have presented, to the best of our knowledge, the first systematic
empirical study of NFA complementation approaches. We suggested
several novel algorithms for complementation of (subclasses of) NFAs.
We carried out an extensive experimental evaluation of the approaches
and showed that alternative complementation algorithms can often give
a significantly better result than the classic approach (sometimes
even in orders of magnitude). We have also suggested 
a~heuristic that helps to select between the classic approach and
reverse powerset complementation.



There are still multiple opportunities for improvement,
e.g., in the partitioning process for the
sequential complementation. Moreover, other structural classes of
automata amenable for efficient complementation may exist.


\vspace{-0.0mm}
\section*{Acknowledgements}\label{sec:acknowledge}
\vspace{-0.0mm}

We thank the anonymous reviewers for careful reading of the paper and
their suggestions that improved its quality.  L.~Holík and O.~Lengál
were supported by the Czech Ministry of Education, Youth and Sports
ERC.CZ project LL1908, the Czech Science Foundation projects 23-07565S
and 25-17934S, and the FIT BUT internal project
FIT-S-23-8151. A.~Štěpková and J.~Strejček were supported by the
European Union’s Horizon Europe program under the grant agreement
No.~101087529 (CHESS). 

\vspace{-3mm}
\bibliography{literature}
\bibliographystyle{splncs04}

\ifTR
\newpage
\appendix
\crefalias{section}{appendix}
\crefalias{subsection}{appendix}

\section{Additional Definitions} \label{sec:ext-nfa-additional}

\newcommand{\induced}[2]{#1|_{#2}}

This section adds a few definitions that were either only outlined or completely omitted in \cref{sec:generalized-compl} due to the page limit.

Let us fix a port NFA $\A = (Q, \Sigma, \delta, \inportsets, \outportsets)$. For a set of states $R\subseteq Q$, the states in $R$ with an incoming transition from $Q\smallsetminus R$ are called \emph{entry ports} and states of $R$ with an outgoing transition to $Q\smallsetminus R$ are called \emph{exit ports}. 
We denote the set of all entry ports of $R$ by $\inportsof{R}$ and the set of exit ports by $\outportsof{R}$.
For a partition $\{R, S\}$ of $Q$, we call the transitions between $R$ and $S$ \emph{transfer transitions} denoted by $\gates$.
If the transfer transitions lead only from $R$ to $S$, then
the ordered pair $(R,S)$ is a \emph{sequential partition}.

A \emph{subautomaton} of the port NFA $\A$ \emph{induced} by a set of states $R\subseteq Q$ isthe port NFA $\induced \A R= (R, \Sigma_R, \delta_R, \inportsets_R, \outportsets_R)$ where
$\delta_R=\delta\cap R\times\Sigma\times R$, 
$\Sigma_R = \{a \in \Sigma \mid \exists p \xrightarrow{a} q \in \delta_R\}$ and its port sets are of two types:
(1) \emph{outer port sets} are intersections of the original port sets with $R$ and (2) \emph{inner port sets} contain states of $\inportsof{R}$ and $\outportsof R$, partitioned arbitrarily. Different complementation constructions will use different partitionings. 
Formally, 
$\inportsets_R = (I_1\cap R,\ldots,I_k\cap R)\cdot \innerinportsets$ and 
$\outportsets_R = (F_1\cap R,\ldots,I_{\ell}\cap R)\cdot \inneroutportsets$ where 
$\innerinportsets$ is a sequence of port sets, which are all subsets of $\inportsof R$ and 
$\inneroutportsets$ is a sequence of port sets, subsets of $\outportsof R$.

If $(Q_1, Q_2)$ is a sequential partition of $Q$, we say that $(\A_1, \A_2)$ is a sequential partition of $\A$, where $\A_1 = (Q_1, \Sigma_1, \delta_1, \inportsets_1, \outportsets_1)$ and $\A_2 = (Q_2, \Sigma_2, \delta_2, \inportsets_2, \outportsets_2)$ are the subautomata of $\A$ induced by $Q_1$ and $Q_2$, respectively. We call $\A_1$ and $\A_2$ the \emph{front} and \emph{rear component}. The subautomaton $\A_1$ has a sequence $\inneroutportsets_1$ of inner exit ports, which contain states with outgoing transitions to $\A_2$, while $\A_2$ has a sequence of inner entry ports denoted by $\innerinportsets_2$, gathering states with incoming transitions from $\A_1$.

When addressing the entry and exit port sets of a port NFA, we often abuse notation and write $I_i \in \inportsets$ and $F_j \in \outportsets$. A state is considered \emph{reachable} in a port NFA, if it is reachable in at least one of its slices.

The union of two port NFAs is done element-wise, which enforces that both input automata must have the same amount of entry and exit port sets. Given two port NFAs $\A_1 = (Q_1, \Sigma, \delta_1, (I_{1i})_{0 \leq i \leq k}, (F_{1j})_{0 \leq j \leq \ell})$ and $\A_2 = (Q_2, \Sigma, \delta_2, (I_{2i})_{0 \leq i \leq k}, (F_{2j})_{0 \leq j \leq \ell})$, \\such that $Q_1 \cap Q_2 = \emptyset$, their union is the port NFA $\A_1 \Cup \A_2 = (Q_1 \cup Q_2, \Sigma, \delta_1 \cup \delta_2, (I_{1i} \cup I_{2i})_{0 \leq i \leq k}, (F_{1j} \cup F_{2j})_{0 \leq j \leq \ell})$.

For cases when it is not convenient to refer to a port set by its index, we introduce the following notation. For a port NFA $\A = (Q, \Sigma, \delta, \inportsets, \outportsets)$ and two
sets $P, R \subseteq Q$, the NFA $\autfromto{\A}{P}{R} = (Q, \Sigma, \delta, P,
R)$ is given by assigning $P$, resp.\ $R$ as the set of initial, resp.\ final states. The notation may be extended to states by setting $\autfromto{\A}{p}{q} = \autfromto{\A}{\{p\}}{\{q\}}$ for $p, q \in Q$.

\section{Generalized Sequential Complementation} \label{sec:port-general}

This construction is a generalization of the sequential complementation presented in \cref{sec:seq-simple} and aims to build possibly nondeterministic complements of arbitrary port NFAs. It uses a port NFA $\A$ partitioned into $\A_1$ and $\A_2$, where $\A_1$ is deterministic and complete. If $\A_1$ is not deterministic, it may be determinized as defined in \cref{sec:extended-powerset}.

The basic idea of sequential complementation assumes that the front and rear component are connected by a single edge. Here we aim to extend the technique to the setting where the components are connected by arbitrary transitions in the direction from $\A_1$ to $\A_2$. Also, $\A_1$ may contain outer exit ports (or ``final states''), and $\A_2$ may contain outer entry ports (``initial states''). Other than that, the principle of the basic construction is kept; the complement $\C$ of $\A$ is composed from the deterministic $\A_1$ and the (possibly nondeterministic) complement $\C_2$ of $\A_2$ in a very similar manner.

For the intuition, let us fix $I_i \in \inportsets$ and $F_j \in \outportsets$ and think about $\A$ as if it were a standard NFA $\autij{\A}{i}{j}$. Let $\A$ read a word $w \in \Sigma^*$ starting in the only initial state of $\A_1$ (because $\A_1$ is deterministic). Firstly, there is a run of $\A$ that stays in $\A_1$ its whole life because $\A_1$ is complete. Hence, $\C$ needs to check that $w \notin \langof{\A_1}$. Secondly, consider a different run of $\A$ and the moment where it has just moved from $\A_1$ to an entry port $p$ in $\A_2$. If the suffix $v$ of $w$ is currently remaining on input, we will call $v$ a $p$-\emph{remainder}. In this case, $\C$ will instantiate a fresh copy of $\C_2$, which will check that $v$ is not accepted in $\A_2$ starting from $p$. In fact, $\C$ will do so for every possible pair $(v, p)$, where $v$ is a $p$-remainder. 

\medskip

To formally define the construction, let $\A = (Q, \Sigma, \delta, \inportsets, \outportsets)$, where $\inportsets = (I_i)_{0 \leq i \leq k}$ and $\outportsets = (F_j)_{0 \leq j \leq \ell}$, be a port NFA partitioned into a sequential partition of two subautomata $\A_1 = (Q_1, \Sigma, \delta_1, \inportsets_1, \outportsets_1)$ and $\A_2 = (Q_2, \Sigma, \delta_2, \inportsets_2, \outportsets_2)$, where $\A_1$ is deterministic and complete.

Because we do not need to construct a complement of $\A_1$, it does not need any inner exit port sets, and, therefore, we set $\inneroutportsets_1$ to be an empty sequence. On the contrary, the complement $\C_2$ of $\A_2$ must be able to recognize which words are accepted by $\A_2$ starting from any of its inner entry ports. Therefore, if all inner entry ports of $\A_2$ are indexed from $k$ to $k'$ as $p_{k+1}, \ldots, p_{k'}$, we set $\innerinportsets = (\{p_i\})_{k < i \leq k'}$. 
Given the newly specified entry port sets, let $\C_2 = (\elc{Q}_2, \Sigma, \elc{\delta}_2, \elc{\inportsets}_2, \elc{\outportsets}_2)$ be a complement of $\A_2$.

The complement of $\A$ is the port NFA $\C = (\elc{Q}, \Sigma, \elc{\delta}, \elc{\inportsets}, \elc{\outportsets})$ defined as follows.
\begin{itemize}
    \item Identically to the basic construction, we set $\elc{Q} = Q_1 \times 2^{\elc{Q}_2}$. The meaning also stays the same; in the pair $(q, R) \in \elc{Q}$, $q$ is the current state of $\A_1$, and $R$ contains the current states of the active instances of $\C_2$.
    \item The transition relation $\elc{\delta}$ is again formed by three building blocks: the corresponding transition in $\A_1$, the transitions of each running instance of $\C_2$, and the possible (multiple) new instances of $\C_2$.
    
    
    Consider a state $(q, R) \in \elc{Q}$ and $a \in \Sigma$. The states representing fresh runs of $\C_2$ are determined by the \emph{intergalactic multi-product transition function} $\tprodinter$. It takes every entry port $p$ of $\A_2$ such that $q \xrightarrow{a} p \in \gates$, and lets $\C_2$ run from some state in the port complement of $\{p\}$.
        
    If $\gates(q, a) = \{p_{i_1}, \ldots, p_{i_m}\}$, then
    $$
        \tprodinter(q, a) = \{ \{t_1, \dots, t_m \} \mid t_j \in \elc{I}_{2{i_j}} , 1 \leq j \leq m \}.
    $$


    The transition function $\elc{\delta}$ is defined as follows. For each $(q,\{r_1,\ldots,r_n\})\in\elc{Q}$, $a\in\Sigma$, transition relations $q\xrightarrow{a}q'\in\delta_1$,$r_i\xrightarrow{a}s_i\in\elc{\delta}_2$ for all $1\le i\le n$, and $T \in \tprodinter(q, a)$, the transition relation $\elc{\delta}$ contains a transition
    $$
    (q,\{r_1,\ldots,r_n\}) \xrightarrow{a} (q',\{s_1,\ldots,s_n\} \cup T).
    $$


    \item In addition to the basic setting, $\A_2$ may have outer entry ports (or ``initial states''). If that is the case, $\C$ initiates an instance of $\C_2$ running from the corresponding outer entry ports right at the start. Therefore, the port complement of $I_i \in \inportsets$ is defined as follows. Because $\A_1$ is deterministic, we suppose $I_{1i} = \{q_0\}$ (follows from the definition of a deterministic port NFA). 
    $$
        \elc{I}_i = \begin{cases}
            \{ (q_0, \emptyset)\} &\text{if } I_{2i} = \emptyset, \\
            \{ (q_0, \{r_0\}) \mid r_0 \in \elc{I}_{2i} \} &\text{otherwise.}
        \end{cases}
    $$

    \item Contrary to the basic construction, $\A_1$ can have outer exit ports (or ``final states''). Therefore, $\C$ must not accept the words that would be accepted by $\A_1$. Hence, the port complement of $F_j \in \outportsets$ is $\elc{F}_j = (Q_1 \smallsetminus F_{1j}) \times 2^{\elc{F}_{2j}}$.

\end{itemize}

\begin{theorem}
    For each $I_i \in \inportsets$ and $F_j \in \outportsets$, it holds that $\langof{\autij{\C}{i}{j}} = \compl{\langof{\autij{\A}{i}{j}}}.$
\end{theorem}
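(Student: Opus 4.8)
The plan is to fix an entry port set $I_i \in \inportsets$ and an exit port set $F_j \in \outportsets$ and to prove the two inclusions $\langof{\autij{\C}{i}{j}} \subseteq \compl{\langof{\autij{\A}{i}{j}}}$ and $\compl{\langof{\autij{\A}{i}{j}}} \subseteq \langof{\autij{\C}{i}{j}}$ separately, mirroring the proof of the basic construction in \cref{sec:seq-simple}. The first preparatory step is a structural characterization of $\langof{\autij{\A}{i}{j}}$ that exploits the sequential partition: since every transfer transition goes from $Q_1$ to $Q_2$, an accepting run of $\autij{\A}{i}{j}$ either (A) stays inside $\A_1$ and ends in the outer exit port $F_{1j}$; or (B) starts in $\A_1$, crosses to $\A_2$ through exactly one transfer transition $q'' \xrightarrow{a} p$ and is then accepted in $\autfromto{\A_2}{p}{F_{2j}}$; or (C) starts in an outer entry port of $\A_2$ and is accepted in $\autij{\A_2}{i}{j}$. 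Thus $w \in \langof{\autij{\A}{i}{j}}$ iff one of (A)--(C) holds for some splitting of $w$.

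The second step is a run-invariant lemma for $\autij{\C}{i}{j}$. Because $\A_1$ is deterministic and complete, the first component of every reachable state $(q,R)$ is the unique $\det{\A_1}$-state reached on the prefix read so far, independently of nondeterministic choices. The set $R$ is exactly the union of current states of a family of independent $\C_2$-instances: one started at the beginning from the outer entry port complement $\elc{I}_{2i}$ (present iff $I_{2i}\neq\emptyset$), and, at every position where a transfer transition $p \in \gates(q,a)$ fires, one started from the port complement of $\{p\}$ via $\tprodinter$. The fact used in both directions is that following a single instance through a run of $\C$ yields a genuine run of $\C_2$ on the corresponding suffix, and conversely. For the inclusion $\langof{\autij{\C}{i}{j}} \subseteq \compl{\langof{\autij{\A}{i}{j}}}$ I argue by contraposition: assuming $w$ satisfies one of (A)--(C), I show no run reaches $\elc{F}_j = (Q_1\smallsetminus F_{1j})\times 2^{\elc{F}_{2j}}$. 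In case (A) the deterministically tracked first component ends in $F_{1j}$, so no state of $\elc{F}_j$ is reachable. In cases (B) and (C) the construction forces \emph{every} run of $\C$ to spawn the relevant instance (because $p \in \gates(q,a)$ makes $\tprodinter$ add a state of the port complement of $\{p\}$, and the outer instance is built into $\elc{I}_i$); tracking this instance gives a run of $\C_2$ on the accepted suffix, which by $\langof{\autij{\C_2}{i'}{j}} = \compl{\langof{\autij{\A_2}{i'}{j}}}$ cannot reach $\elc{F}_{2j}$, so its final state remains in $R_m$ but outside $\elc{F}_{2j}$ and the run is rejecting.

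For the converse I assume $w$ violates all of (A)--(C) and build an accepting run. The first component follows the unique $\det{\A_1}$-run, ending outside $F_{1j}$ by the failure of (A). The failure of (B) and (C) means every suffix that some instance must check is rejected by the relevant slice of $\A_2$, hence accepted by the corresponding slice of $\C_2$, so an accepting $\C_2$-run exists for each instance. The main obstacle is that $R$ is a \emph{set}: two instances sitting in the same $\C_2$-state are merged, and a single successor must then serve both, so one cannot naively paste together independently chosen accepting runs. I would resolve this with a positional strategy $\mathrm{next}(r,v')$ that, for every state $r$ from which $\elc{F}_{2j}$ is reachable on the remaining suffix $v'$, fixes one successor preserving this property. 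Forcing all instances to obey $\mathrm{next}$ makes merging consistent and maintains the invariant that every state currently in $R$ can still reach $\elc{F}_{2j}$ on the remaining suffix.

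At the end of $w$ this invariant forces $R_m \subseteq \elc{F}_{2j}$, which together with $q_m \notin F_{1j}$ puts the reached state in $\elc{F}_j$, so the run is accepting. It remains to check that each step is a legal $\C$-transition: the $\A_1$-part is the unique deterministic move, the surviving instances move by $\mathrm{next}$, and the freshly spawned set is a valid choice in $\tprodinter(q,a)$ where for each $p \in \gates(q,a)$ one picks an initial state of $\elc{I}_{2i'}$ that is already winning on the remaining suffix (existing by the failure of (B)). The bulk of the remaining work is routine bookkeeping of the port indices and of the $\tprodinter$-choices; the merging argument in the backward direction is the only genuinely delicate point, and it is exactly what the positional strategy is designed to handle.
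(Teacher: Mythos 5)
Your proposal is correct and follows essentially the same route as the paper's proof: the same three-case analysis of where an accepting run of $\autij{\A}{i}{j}$ lives (entirely in $\A_1$, from an outer entry port of $\A_2$, or across a transfer transition) for the forward inclusion, and the same instance-tracking construction of an accepting run of $\C$ for the converse, including the edge case $\elc{I}_{2i'}=\emptyset$. The one point where you go beyond the paper is the positional strategy for merged instances: the paper's proof silently pastes together independently chosen accepting runs of $\C_2$, and your observation that two instances sitting in the same state of $R$ face the same remaining suffix (so a single successor choice serves both) is precisely the justification that makes that step airtight.
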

\begin{proof}
  First, consider $w \in \langof{\autij{\A}{i}{j}}$. 

  If $w \in \langof{\autij{\A_1}{i}{j}}$, every run of $\autij{\C}{i}{j}$ over $w$ ends in some state $(q, R)$, where $q \in F_{1j}$, and so $(q, R) \notin \elc{F}_{j}$.

  If $w \in \langof{\autij{\A_2}{i}{j}}$, then $I_{2i} \neq \emptyset$, and $w \notin \langof{\autij{\C_2}{i}{j}}$. If $\elc{I}_{2i}$ is not empty, there is a state $(q_0, \{r_0\}) \in \elc{I}_i$, where $I_{1i} = \{q_0\}$ and $r_0 \in \elc{I}_{2i}$. Hence, each state $(q, R)$ reached by $\autij{\C}{i}{j}$ reading $w$ is not accepting, as $R$ must contain a state $r \notin \elc{F}_{2j}$, in which a run of $\autij{\C_2}{i}{j}$ over $w$ ends. If, on the other hand, $\elc{I}_{2i}$ is empty, then also $\elc{I}_i = \emptyset$ and $w$ is certainly not accepted by $\autij{\C}{i}{j}$. 

  If $w \notin \langof{\autij{\A_1}{i}{j}}$ and $w \notin \langof{\autij{\A_2}{i}{j}}$, it must be the case that $w = ucv$, such that $\autij{\A}{i}{j}$ reaches an exit port $p$ of $\A_1$ over $u$, passes a transition $p \xrightarrow{c} p'$ into an entry port $p'$ of $\A_2$, and a run of $\A_2$ starting in $p'$ accepts $v$ in $F_j$. In other words, $v$ is a $p'$-remainder, $\{p'\} = I_{2i'}$ for some $k < i' \leq k'$, and $v \in \langof{\autij{\A_2}{i'}{j}}$. Hence,
  $\autij{\C}{i}{j}$ can reach only states of the form $(p,R)$ after reading
  $u$. When $\autij{\C}{i}{j}$ reads
  $c$ from this state, there are two options. Either $\elc{I}_{2i'} = \emptyset$, then $\tprodinter(p, c) = \emptyset$ and, therefore, $\C$ has no transitions from $(p,R)$. 
  If $\elc{I}_{2i'} \neq \emptyset$, the complement $\autij{\C}{i}{j}$ reaches a state $(p',R')$ from $(p,R)$ under $c$, where
  $R'$ has to contain some entry port $t \in \elc{I}_{2i'}$ of
  $\C_2$. However, $v\in \langof{\autij{\A_2}{i'}{j}}$ implies that $\autij{\C_2}{i'}{j}$ does not accept
  $v$. Hence, each state $(q'',R'')$ of $\autij{\C}{i}{j}$ reached from $(p',R')$ by reading
  $v$ is not accepting as it cannot satisfy
  $R''\subseteq\elc{F}_{2j}$. To sum up, $\autij{\C}{i}{j}$ has no accepting run over $ucv=w$.

  Now assume that $w \notin \langof{\autij{\A}{i}{j}}$. As
  $\autij{\A_1}{i}{j}$ is deterministic and complete, it has a single run over $w$. Then $w \notin \langof{\autij{\A_1}{i}{j}}$, and, therefore, the run of $\autij{\A_1}{i}{j}$ over $w$ ends in $q \in Q_1 \smallsetminus F_{1i}$. It also holds that $w \notin \langof{\autij{\A_2}{i}{j}}$, which implies that $\autij{\C_2}{i}{j}$ has an accepting run over $w$. 
  
  Whenever the run of $\autij{\A_1}{i}{j}$ reaches an exit port $p$ of $\A_1$ over some prefix $u$ of $w$ and $w$ is of the form $ucv$, we know that for every transition $p \xrightarrow{c} p'$ into an entry port $p'$ of $\C_2$, $v$ is not accepted in $\autij{\A_2}{i}{j}$ starting from $p'$. We may assume that $\{p'\} = I_{2i'}$ for some $k < i' \leq k'$, as $p'$ is an entry port if $\A_2$. In other words, if the prefix
  $u$ is followed by $c$, then
  $\autij{\C_2}{i'}{j}$ has an accepting run over the corresponding suffix
  $v$. We can construct an accepting run of
  $\autij{\C}{i}{j}$ over $w$ as follows. If $I_{2i}$ is not empty, we set the initial state of this run as $(q_0, \{r_0\})$, where $r_0 \in \elc{I}_{2i}$ is the initial state of the accepting run of $\autij{\C_2}{i}{j}$. If $I_{2i} = \emptyset$, the initial state of our accepting run is the only initial state of $\autij{\C}{i}{j}$, $(q_0, \emptyset)$.
  Whenever the automaton
  $\autij{\A_1}{i}{j}$ tracked in the first element of the states of
  $\C$ reaches an exit port $p$ and $\autij{\C}{i}{j}$ reads
  $c$, then for every transition $p \xrightarrow{c} p' \in \gates$, we add to the second element of the state of
  $\C$ the initial state of the corresponding accepting run of
  $\autij{\C_2}{i'}{j}$. We follow each chosen accepting run of $\C_2$ in the future transitions of
  $\autij{\C}{i}{j}$. After reading the whole
  $w$, the second element of the reached state of
  $\C$ will contain only accepting states of
  $\elc{F}_{2j}$. Thus, the constructed run of $\autij{\C}{i}{j}$ over
  $w$ is accepting and $w\in\langof{\autij{\C}{i}{j}}$.
\end{proof}

\section{Generalized Gate Complementation} \label{sec:gate-general}

This complementation method is an extension of the idea presented in \cref{sec:gate-simple}. It considers a subset of NFAs called \emph{gate} automata and utilizes specific properties of these automata to create nondeterministic complements. We give two distinct generalizations of the basic technique from \cref{sec:gate-simple}, called \methodEqual and \methodDisjoint. Both methods are applicable to a restricted class of gate NFAs. The respective classes for both methods are incomparable. At the end of this section, we also present several modifications of the two main methods, which further extend the classes of automata we are able to complement.

Let $\A = (Q, \Sigma, \delta, \inportsets, \outportsets)$ be a port NFA partitioned into $\A_1 = (Q_1, \Sigma_1, \delta_1, \inportsets_1, \outportsets_1)$ and $\A_2 = (Q_2, \Sigma_2, \delta_2, \inportsets_2, \outportsets_2)$. Let $\gsym = \{c \mid \exists p_1 \xrightarrow{c} p_2 \in \gates\}$ denote the set of symbols appearing on the transitions between $\A_1$ and $\A_2$. 

The partition $(\A_1, \A_2)$ is a \emph{gate partition} of $\A$, if $\gsym \cap \Sigma_i = \emptyset$ for at least one $i \in \{1, 2\}$. If $\A$ has a gate partition, it is a \emph{gate automaton}. The transitions in $\gates$ are then called \emph{gates}, and $\gsym$ is a set of \emph{gate symbols}.

To distinguish entry and exit ports by the symbols on their gates, we define the following notions. For $c \in \gsym$, we set $\outsymgrp{1}{c} = \{p_1 \in \outportsof{Q_1} \mid \exists p_2 \in Q_2 ~.~ p_1 \xrightarrow{c} p_2 \in \gates\}$, and similarly, $\insymgrp{2}{c} = \{p_2 \in \inportsof{Q_2} \mid \exists p_1 \in Q_1 ~.~ p_1 \xrightarrow{c} p_2 \in \gates\}$. Note that a port $p$ can belong to multiple such sets if it has transitions under multiple different symbols.

\begin{figure}[t]
    \centering
    \begin{tikzpicture}[smallautomaton,node distance=8mm,every state/.style={minimum size=5mm}]

    \tikzset{component/.style={rounded corners=10pt, fill=blue!20, draw=blue!20, minimum width=2cm, minimum height=2cm, semithick}}
    
    \node[initial, component, label={[blue, label distance =-0.6cm]90:\large $\A_1$}] (A1) {};
    \node[state, accepting, double=blue!15, double distance=1pt, anchor=east, xshift = -3mm] (fin1) at (A1.east) {};
    
    \node[component, label={[blue, label distance =-0.6cm]90:\large $\A_2$}, right = of A1] (A2) {};
    \node[state, accepting, double=blue!15, double distance=1pt, anchor=east, xshift = -3mm] (fin2) at (A2.east) {};
    
    \node[coordinate, yshift=4mm] (1) at (A1.east) {};
    \node[coordinate] (2) at (A1.east) {};
    \node[coordinate, yshift=-4mm] (3) at (A1.east) {};
    \node[coordinate, yshift=4mm] (4) at (A2.west) {};
    \node[coordinate] (5) at (A2.west) {};
    \node[coordinate, yshift=-4mm] (6) at (A2.west) {};
    
    \node[anchor=south] (A1sym) at (A1.south) {$\Sigma \smallsetminus \gsym$};
    \node[anchor=south] (A2sym) at (A2.south) {$\Sigma$};
     
    \path[->]
    (1) edge node {} (4)
    (2) edge node {} (5)
    (3) edge node [below] {$\gsym$} (6)
    ;
    \end{tikzpicture}
    \caption{A scheme of a gate NFA with no gate symbols in $\A_1$, accepting states symbolize possible occurrences of exit ports.}
    \label{fig:gate-first}
\end{figure}

From now on, let us consider only gate automata with no gate symbols in $\A_1$. The presented ideas can be easily translated to gate NFAs with no gate symbols in $\A_2$, as will be discussed later. We also suppose there are no outer entry ports in $\A_2$, that is $I_{2i} = \emptyset$ for each $0 \leq i \leq k$, and show later how this constraint can be removed. \cref{fig:gate-first} shows a scheme of a gate NFA satisfying these constraints.

In such a gate NFA, the first gate symbol appearing in a word $w \in \Sigma^*$ must be read when transitioning through a gate. For now, let us fix $I_i \in \inportsets$, $F_j \in \outportsets$ and refer to $\A$ as if it was a standard NFA $\autij{\A}{i}{j}$. Suppose $w = ucv$, where $u \in (\Sigma \smallsetminus \gsym)^*$, $c\in \gsym$, and $v \in \Sigma^*$. If $w \notin \langof{\A}$, then one of the following happens when $\A$ reads $w$: 
\begin{enumerate}
    \item $\A_2$ is not reached by any run of $\A$. This means $u$ is not accepted by any inner exit port of $\A_1$ with a gate under $c$.
    \item Some runs of $\A$ may reach an inner entry port of $\A_2$ after reading $uc$, but $v$ is not accepted in any run of $\A_2$ starting from the given entry ports.
\end{enumerate}

When building a complement of a port gate NFA $\A$, we follow these two cases. We construct two automata $\pre{\C}$ and $\suf{\C}$, where $\pre{\C}$ accepts all words falling under point (1), as well as all $w \in \compl{\langof{\A}} \cap (\Sigma \smallsetminus \Gamma)^*$. The automaton $\suf{\C}$ accepts all words from point (2). The complement of $\A$ is then defined as $\C = \pre{\C} \Cup \suf{\C}$. Below we show the construction of $\pre{\C}$, which is the same for both \methodEqual and \methodDisjoint method. The difference between these methods lies in the construction of $\suf{\C}$, which depends on specific properties of $\A$. We leave the exact definitions of the construction of $\suf{\C}$ for the following sections and only show a scheme of the complement $\C$ for both methods in \cref{fig:gate-comp-general}.

The automaton $\pre{\C}$ is built as follows. If we index the gate symbols from $\ell$ to $\ell'$ as $\gsym = \{c_{\ell + 1}, \ldots, c_{\ell'}\}$, the inner exit port sets of $\A_1$ are set as $\inneroutportsets_1 = (\outsymgrp{1}{c_{j'}})_{{\ell} < j' \leq \ell'}$. In other words, each inner port set corresponds to one gate symbol and consists of ports with an outgoing gate under said symbol. We then use the complement $\C_1 = (\elc{Q}_1, \Sigma \smallsetminus \gsym, \elc{\delta}_1, \elc{\inportsets}_1, \elc{\outportsets}_1)$ of $\A_1$ with respect to the alphabet $\Sigma \smallsetminus \gsym$, and set $\pre{\C} = (\pre{Q}, \Sigma, \pre{\delta}, \pre{\inportsets}, \pre{\outportsets})$, where
\begin{itemize}
    \item $\pre{Q} = \elc{Q}_1 \cup \{ s \}$,
    \item $\pre{\delta} = \elc{\delta}_1 \cup \{p \xrightarrow{c_{j'}} s \mid p \in \elc{F}_{1j'}, \ell < j' \leq \ell' \} \cup \{s \xrightarrow{a} s \mid a \in \Sigma\}$.
    \item $\pre{\inportsets} = (I_{pre(i)})_{0 \leq i \leq k}$, and for $I_i \in \inportsets$, we set 
    $I_{pre(i)} = I_{1i}$. Note that we suppose $I_{2i} = \emptyset$.
    \item $\pre{\outportsets} = (F_{pre(j)})_{0 \leq j \leq \ell}$, and for $F_j \in \outportsets$, we define $F_{pre(j)} = \elc{F}_{1j} \cup \{s\}$.
\end{itemize}

\begin{lemma}
    \label{lemma:gate-pre}
    For each $I_i \in \inportsets$, $F_j \in \outportsets$, and $w \in \Sigma^*$, the following points hold.
    \begin{enumerate}
        \item $\compl{\langof{\autij{\A}{i}{j}}} \cap (\Sigma \smallsetminus \gsym)^* \subseteq \langof{\autij{\pre{\C}}{i}{j}}$
        
        \item If $w = ucv$ such that $u \in (\Sigma \smallsetminus \gsym)^*$, $c = c_{j'} \in \gsym$, $v \in \Sigma^*$, and $u \notin \langof{\autij{\A_1}{i}{j'}}$, then $w \in \langof{\autij{\pre{\C}}{i}{j}}$.
        
        \item $\langof{\autij{\A}{i}{j}} \cap \langof{\autij{\pre{\C}}{i}{j}} = \emptyset$
        
    \end{enumerate}
\end{lemma}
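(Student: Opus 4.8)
The plan is to prove the three points of \cref{lemma:gate-pre} separately, in each case leaning on the fact that $\C_1$ is a complement of $\A_1$ over the reduced alphabet $\Sigma \smallsetminus \gsym$, and on two structural facts about $\pre{\C}$: the sink state $s$ loops on every symbol and can be \emph{entered} only by a gate transition $p \xrightarrow{c_{j'}} s$ with $p \in \elc{F}_{1j'}$, and, since $\A_2$ has no outer entry ports and $\A_1$ carries no gate symbol, any run of $\autij{\A}{i}{j}$ stays inside $\A_1$ until the first gate symbol is read. These two observations are the workhorses of all three parts.

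For point 1, I would note that a word $w \in (\Sigma \smallsetminus \gsym)^*$ forces every run of $\autij{\A}{i}{j}$ to remain in $\A_1$, so $w \notin \langof{\autij{\A}{i}{j}}$ gives $w \notin \langof{\autij{\A_1}{i}{j}}$; the complement property of $\C_1$ then yields a run of $\C_1$ over $w$ ending in some $p \in \elc{F}_{1j}$, which is also a run of $\pre{\C}$ (as $\elc{\delta}_1 \subseteq \pre{\delta}$) ending in the accepting set $F_{pre(j)} = \elc{F}_{1j} \cup \{s\}$. Point 2 is equally direct: from $u \notin \langof{\autij{\A_1}{i}{j'}}$ and $u \in (\Sigma \smallsetminus \gsym)^*$ I get $u \in \langof{\autij{\C_1}{i}{j'}}$, i.e.\ a $\C_1$-run over $u$ ending in some $p \in \elc{F}_{1j'}$; I extend it in $\pre{\C}$ by the gate transition $p \xrightarrow{c_{j'}} s$ and then loop at $s$ while reading $v$, ending in the accepting state $s$, so $w = u c_{j'} v \in \langof{\autij{\pre{\C}}{i}{j}}$.

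Point 3 is the soundness direction and carries the real work; I would argue by contradiction, splitting on whether $w \in \langof{\autij{\A}{i}{j}}$ contains a gate symbol. If it does not, then as in point 1 the word is accepted inside $\A_1$, so $w \in \langof{\autij{\A_1}{i}{j}}$, while any accepting run of $\pre{\C}$ over $w$ cannot reach $s$ (that needs a gate symbol) and hence stays in $\elc{Q}_1$, witnessing $w \in \langof{\autij{\C_1}{i}{j}}$ and contradicting the complement property. If $w$ does contain a gate symbol, write $w = u c v$ with $c = c_{j'}$ its \emph{first} gate symbol and $u \in (\Sigma \smallsetminus \gsym)^*$. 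On the $\A$ side, the accepting run stays in $\A_1$ while reading $u$ and can read $c$ only through a gate $q_u \xrightarrow{c} p'$, so $q_u \in \outsymgrp{1}{c_{j'}}$ and therefore $u \in \langof{\autij{\A_1}{i}{j'}}$. On the $\pre{\C}$ side, any accepting run must also read $c$; since $u$ carries no gate symbol the run is still in $\elc{Q}_1$ at that moment, and the only $c$-transition out of $\elc{Q}_1$ sends a state of $\elc{F}_{1j'}$ to $s$, so the $\C_1$-run over $u$ ends in $\elc{F}_{1j'}$, giving $u \in \langof{\autij{\C_1}{i}{j'}}$ and hence $u \notin \langof{\autij{\A_1}{i}{j'}}$. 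These two conclusions contradict each other, ruling out any accepting run of $\pre{\C}$ over $w$.

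I expect the gate-symbol case of point 3 to be the main obstacle: one must correctly isolate the slice index $j'$ determined by the first gate symbol on \emph{both} sides of the argument and verify that the structural hypotheses—no gate symbols in $\A_1$ and no outer entry ports in $\A_2$—genuinely force the $\A$-run and the $\pre{\C}$-run to be ``at the gate'' simultaneously when $c$ is read. Once that synchronization is pinned down, everything collapses to the complement property of $\C_1$ applied at index $j'$ together with the fact that $s$ is a sink reachable only via gate transitions.
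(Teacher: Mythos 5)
Your proof is correct and follows essentially the same route as the paper's: each point reduces to the complement property of $\C_1$ over $\Sigma\smallsetminus\gsym$ together with the facts that $s$ is a sink enterable only from $\elc{F}_{1j'}$ under $c_{j'}$ and that runs of both $\A$ and $\pre{\C}$ stay in their first component until the first gate symbol. Your treatment of point~3 is in fact slightly more careful than the paper's (you explicitly pin down the index $j'$ of the \emph{first} gate symbol and the synchronization of the two runs at the gate), and your point~1 correctly concludes with acceptance in $\elc{F}_{1j}$ where the paper's wording is garbled.
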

\begin{proof}
    \begin{enumerate}
        \item If $w \notin \langof{\autij{\A}{i}{j}}$ and $w \in (\Sigma \smallsetminus \gsym)^*$, then $w \notin \langof{\autij{\A_1}{i}{j}}$ and $w \in \langof{\autij{\C_1}{i}{j}}$. Because $w$ contains no gate symbols, $s$ is never reached when reading $w$, and thus no run of $\pre{\autij{\C}{i}{j}}$ ends in $\elc{F}_{1j} \cup \{s\}$.
        
        \item Because $u \in \langof{\autij{\C_1}{i}{j'}}$, then after reading $uc$, $\autij{\pre{\C}}{i}{j}$ transitions from $\elc{F}_{1j'}$ to $s$ and $w$ is eventually accepted.
        
        \item  Let us have $w \in \langof{\autij{\A}{i}{j}}$. If $w$ contains no gate symbols, it is surely not accepted in $s$. But because it contains no gate symbols, then $w \in \langof{\autij{\A_1}{i}{j}}$, and it is neither accepted by $\autij{\C_1}{i}{j}$. If $w = ucv$, where $u \in (\Sigma \smallsetminus \gsym)^*$ and $c = c_{j'} \in \gsym$, then $w$ is certainly not accepted in $\autij{\C_1}{i}{j}$, because it contains a gate symbol. But because $u \in \langof{\autij{\A_1}{i}{j'}}$, then $u \notin \langof{\autij{\C_1}{i}{j'}}$ and state $s$ is never reached when $\autij{\pre{\C}}{i}{j}$ reads $w$.
          \qedhere
    \end{enumerate}
\end{proof}

\subsection{Equal Gate NFAs}

The simplest and potentially smallest construction of $\suf{\C}$ tries to reuse the construction from \cref{sec:gate-simple} with only minor changes, but extend it to a larger class of automata. More precisely, consider $w = ucv$, where $u \in (\Sigma \smallsetminus \gsym)^*$, $c\in \gsym$, and $v \in \Sigma^*$. This version of $\suf{\C}$ just checks that $v$ is not accepted in $\A_2$ starting in any of its entry ports with a gate under $c$ and does not consider the prefix $uc$. However, for $\pre{\C} \Cup \suf{\C}$ to be a complement of $\A$, the following additional constraints must be put on $\A$.

Intuitively, it must not matter through which port was $\A_2$ entered for any prefix $uc$. In other words, all inner entry ports of $\A_2$ must be reached through gates under $c$ by exactly the same prefixes.

This constraint is formally defined as follows. For each gate symbol $c \in \gsym$, we set the subautomaton $\A_{1c}$ of $\A$ as the automaton $\A_1$ together with all gates under $c$, defined as $\A_{1c} = (Q_1 \cup \insymgrp{2}{c}, \Sigma_1 \cup \{c\}, \delta_{1c}, \inportsets_1, \outportsets_1)$, where $\delta_{1c} = \delta_1 \cup \{ p_1 \xrightarrow{c} p_2 \mid p_1 \xrightarrow{c} p_2 \in \gates \}$. Then, if for every gate symbol $c \in \gsym$ and for every entry port set $I_i \in \inportsets_{1}$, we have $\insymgrp{2}{c} = \{ p_1, \ldots , p_n \}$ and 
$\langof{\autfromto{\A_{1c}}{I_i}{\insymgrp{2}{c}}} = \langof{\autfromto{\A_{1c}}{I_i}{p_1}} = \ldots = \langof{\autfromto{\A_{1c}}{I_i}{p_n}}$, 
we say $(\A_1, \A_2)$ is an \emph{equal gate partition}. If $\A$ has an equal gate partition, it is called an \emph{equal gate NFA}.

If $\A$ is an equal gate NFA, $\suf{\C}$ is constructed as follows. Similarly as for $\A_1$ in the construction of $\pre{\C}$, we index the gate symbols as $\gsym = \{c_{k+1}, \ldots, c_{k'}\}$, and set the inner entry port sets of $\A_2$ to be $\innerinportsets_2 = (\insymgrp{2}{c_{i'}})_{k < i' \leq k'}$. Each inner port set again corresponds to one gate symbol and consists of ports into which there is a gate under said symbol. Refer to \cref{fig:gate-comp-general} for a scheme of $\suf{\C}$.

If $\C_2 = (\elc{Q}_2, \Sigma, \elc{\delta}_2, \elc{\inportsets}_2, \elc{\outportsets}_2)$ is a complement of $\A_2$ with respect to the alphabet $\Sigma$, we set $\suf{\C} = (\suf{Q}, \Sigma, \suf{\delta}, \suf{\inportsets}, \suf{\outportsets})$, where:
\begin{itemize}
    \item $\suf{Q} = \elc{Q}_2 \cup \{ t \}$,
    \item $\suf{\delta} = \elc{\delta}_2 \cup \{ t \xrightarrow{c_{i'}} p \mid c_{i'} \in \gsym, p \in \elc{I}_{2i'} \} \cup \{ t \xrightarrow{a} t \mid a \in \Sigma \smallsetminus \gsym \}$.
    \item $\suf{\inportsets} = (I_{\mathit{suf}(i)})_{0 \leq i \leq k}$, and $I_{\mathit{suf}(i)} = \{t\}$ for $I_i \in \inportsets$, 
    \item $\suf{\outportsets} = (F_{\mathit{suf}(j)})_{0 \leq i \leq \ell}$, and $F_{\mathit{suf}(j)} = \elc{F}_{2j}$ for $F_j \in \outportsets$.
\end{itemize}

\begin{lemma}
    \label{lemma:gate-equal-suf}
    If $\A$ is an equal gate NFA, for each $I_i \in \inportsets$, $F_j \in \outportsets$, and $w \in \Sigma^*$, the following holds.
    \begin{enumerate}
        \item If $w = ucv$, such that $u \in (\Sigma \smallsetminus \gsym)^*$, $c = c_{i'} \in \gsym$, and $v \in \Sigma^* \smallsetminus \langof{\autij{\A_2}{i'}{j}}$, then $w \in \langof{\autij{\suf{\C}}{i}{j}}$.
        \item $\langof{\autij{\A}{i}{j}} \cap \langof{\autij{\suf{\C}}{i}{j}} = \emptyset$
    \end{enumerate}
\end{lemma}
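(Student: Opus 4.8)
The plan is to prove the two items separately, in both cases by directly analysing runs of $\suf{\C}$ and leaning on the structural assumptions in force here: no gate symbols appear in $\A_1$, the partition is sequential, and $I_{2i}=\emptyset$. Throughout I fix $I_i\in\inportsets$, $F_j\in\outportsets$ and record the shape of $\suf{\C}$: its unique initial state is $t$ (as $I_{\mathit{suf}(i)}=\{t\}$), $t$ carries self-loops exactly under $\Sigma\smallsetminus\gsym$, its only outgoing gate-symbol transitions are $t\xrightarrow{c_{i'}}p$ for $p\in\elc{I}_{2i'}$, and $t$ is never the target of any transition. Hence once a run leaves $t$ it is confined to $\C_2$, and the final set $F_{\mathit{suf}(j)}=\elc{F}_{2j}$ lies entirely in $\elc{Q}_2$. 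I also use that $\C_2$ complements $\A_2$ over $\Sigma$, so $\langof{\autij{\C_2}{i'}{j}}=\compl{\langof{\autij{\A_2}{i'}{j}}}$, where $\elc{I}_{2i'}$ is the port complement of the inner entry port set $I_{2i'}=\insymgrp{2}{c_{i'}}$.

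For item~1, I would exhibit an explicit accepting run of $\autij{\suf{\C}}{i}{j}$ on $w=ucv$. Starting in $t$, read $u$ with the self-loops of $t$ (legal since $u\in(\Sigma\smallsetminus\gsym)^*$). As $v\notin\langof{\autij{\A_2}{i'}{j}}$ and $\C_2$ complements $\A_2$, we get $v\in\langof{\autij{\C_2}{i'}{j}}$, so $\autij{\C_2}{i'}{j}$ has an accepting run on $v$ from some $p\in\elc{I}_{2i'}$ to a state of $\elc{F}_{2j}$. Reading $c=c_{i'}$ via $t\xrightarrow{c_{i'}}p$ and then following that run of $\C_2$ produces a run of $\suf{\C}$ ending in $\elc{F}_{2j}=F_{\mathit{suf}(j)}$, so $w\in\langof{\autij{\suf{\C}}{i}{j}}$.

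For item~2, I would take any $w\in\langof{\autij{\suf{\C}}{i}{j}}$ and decompose its accepting run at the first gate symbol. Since $t$ self-loops only under non-gate symbols and acceptance forces reaching $\elc{F}_{2j}\subseteq\elc{Q}_2$ (hence leaving $t$), the run stays at $t$ over the gate-free prefix $u$ preceding the first gate symbol $c=c_{i'}$, then moves into some $p\in\elc{I}_{2i'}$ and stays in $\C_2$ over the remaining suffix $v$. Thus $w=ucv$ with $u\in(\Sigma\smallsetminus\gsym)^*$ and $v\in\langof{\autij{\C_2}{i'}{j}}=\compl{\langof{\autij{\A_2}{i'}{j}}}$, i.e.\ $v\notin\langof{\autij{\A_2}{i'}{j}}$. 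Now suppose for contradiction $w\in\langof{\autij{\A}{i}{j}}$ and fix an accepting run of $\A$. Because $I_{2i}=\emptyset$ it starts in $\A_1$; since $\A_1$ has no transition under any gate symbol and $(\A_1,\A_2)$ is a sequential partition, the run remains in $\A_1$ while reading $u$, and at $c_{i'}$ it must traverse a gate into some entry port $p_2\in\insymgrp{2}{c_{i'}}=I_{2i'}$, after which it reads $v$ inside $\A_2$ and ends in $F_j\cap Q_2=F_{2j}$. Hence $v\in\langof{\autfromto{\A_2}{p_2}{F_{2j}}}\subseteq\langof{\autij{\A_2}{i'}{j}}$, contradicting $v\notin\langof{\autij{\A_2}{i'}{j}}$. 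So $w\notin\langof{\autij{\A}{i}{j}}$, giving item~2.

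The main obstacle I anticipate is the run-structure bookkeeping in item~2: rigorously forcing both an accepting run of $\suf{\C}$ and any accepting run of $\A$ to ``synchronise'' at the first gate symbol $c_{i'}$ and to pass through the inner entry port set $I_{2i'}=\insymgrp{2}{c_{i'}}$, and then matching indices so that $\langof{\autij{\C_2}{i'}{j}}$ refers to exactly the slice of $\A_2$ used by the run of $\A$. This relies on all three structural hypotheses; notably, the \emph{equal} gate condition is not itself needed for this lemma in isolation, and I expect it to enter only when $\suf{\C}$ and $\pre{\C}$ are combined to cover all of $\compl{\langof{\autij{\A}{i}{j}}}$.
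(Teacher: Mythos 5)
Your proof is correct and follows essentially the same route as the paper's: item~1 by exhibiting the run that loops on $t$ over $u$, crosses $t\xrightarrow{c_{i'}}\elc{I}_{2i'}$, and follows an accepting run of $\autij{\C_2}{i'}{j}$ on $v$; item~2 by observing that $\suf{\C}$ can only enter $\C_2$ at the first gate symbol, so acceptance by $\suf{\C}$ forces $v\in\langof{\autij{\C_2}{i'}{j}}$ while acceptance by $\A$ (starting in $\A_1$ since $I_{2i}=\emptyset$) forces $v\in\langof{\autij{\A_2}{i'}{j}}$. The only cosmetic difference is that you argue item~2 in the contrapositive direction, and your closing observation that the equal-gate hypothesis is only needed later, when combining $\pre{\C}$ and $\suf{\C}$, matches where the paper actually invokes it.
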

\begin{proof}
    \begin{enumerate}
        \item When $\autij{\suf{\C}}{i}{j}$ reads $w = ucv$, it transitions from $t$ to $I_{2i'}$ only exactly after reading $uc$. Because $v \notin \langof{\autij{\A_2}{i'}{j}}$, then $v \in \langof{\autij{\C_2}{i'}{j}}$, and is, therefore, accepted in $\elc{F}_{2j} = F_{\mathit{suf}(j)}$ by $\autij{\suf{\C}}{i}{j}$.
        
        \item Suppose $w \in \langof{\autij{\A}{i}{j}}$. If $w$ contains no gate symbols, the only run of $\suf{\C}$ over $w$ stays in $t$ and is nonaccepting. If $w = ucv$, where $u \in (\Sigma \smallsetminus \gsym)^*$ and $c = c_{i'} \in \gsym$, then certainly $v \in \langof{\autij{\A_2}{i'}{j}}$ and $v \notin \langof{\autij{\C_2}{i'}{j}}$. Because $\autij{\suf{\C}}{i}{j}$ transitions from $t$ to $I_{2i'}$ only after reading $uc$ when reading $w$, it does not accept $w$.
      \qedhere
    \end{enumerate}
\end{proof}

Now, recall that the complement of $\A$ is $\C = \pre{\C} \Cup \suf{\C}$.

\begin{theorem}
    \label{lemma:gate-equal}
    If $\A$ is an equal gate NFA, then for all $I_i \in \inportsets$ and $F_j \in \outportsets$, we have $\langof{\autij{\C}{i}{j}} = \compl{\langof{\autij{\A}{i}{j}}}$.
\end{theorem}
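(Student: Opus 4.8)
The plan is to obtain the theorem by assembling the three parts of \cref{lemma:gate-pre} with the two parts of \cref{lemma:gate-equal-suf}, invoking the equal gate property at a single decisive point. Since the union of port NFAs is defined slice-wise, I would first record that $\langof{\autij{\C}{i}{j}} = \langof{\autij{\pre{\C}}{i}{j}} \cup \langof{\autij{\suf{\C}}{i}{j}}$, so each inclusion can be argued on the two halves separately. For soundness, $\langof{\autij{\C}{i}{j}} \subseteq \compl{\langof{\autij{\A}{i}{j}}}$, I would simply combine \cref{lemma:gate-pre}(3) and \cref{lemma:gate-equal-suf}(2): both assert that the corresponding half of $\C$ accepts no word of $\langof{\autij{\A}{i}{j}}$, hence neither does their union. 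This direction requires no new argument.

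The content lies in completeness, $\compl{\langof{\autij{\A}{i}{j}}} \subseteq \langof{\autij{\C}{i}{j}}$. Fix $w \notin \langof{\autij{\A}{i}{j}}$. If $w$ contains no gate symbol, \cref{lemma:gate-pre}(1) already gives $w \in \langof{\autij{\pre{\C}}{i}{j}}$. Otherwise I would write $w = ucv$ with $u \in (\Sigma \smallsetminus \gsym)^*$, $c \in \gsym$ the \emph{first} gate symbol, and $v \in \Sigma^*$; let $j'$ and $i'$ be the indices of the exit port set of $\A_1$ and the entry port set of $\A_2$ associated with $c$ (so $c = c_{j'} = c_{i'}$). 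The crux is the decomposition
\[
  w \in \langof{\autij{\A}{i}{j}} \iff u \in \langof{\autij{\A_1}{i}{j'}} \text{ and } v \in \langof{\autij{\A_2}{i'}{j}}.
\]
Granting it, $w \notin \langof{\autij{\A}{i}{j}}$ forces $u \notin \langof{\autij{\A_1}{i}{j'}}$ or $v \notin \langof{\autij{\A_2}{i'}{j}}$; the former is dispatched by \cref{lemma:gate-pre}(2), giving $w \in \langof{\autij{\pre{\C}}{i}{j}}$, and the latter by \cref{lemma:gate-equal-suf}(1), giving $w \in \langof{\autij{\suf{\C}}{i}{j}}$. In both cases $w \in \langof{\autij{\C}{i}{j}}$, which completes the proof.

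It remains to justify the decomposition, which is the main obstacle and the only place where the equal gate property enters. Because $(\A_1,\A_2)$ is a sequential partition with $\gsym \cap \Sigma_1 = \emptyset$, any run of $\autij{\A}{i}{j}$ over $w$ starts in $I_{1i}$ (recall $I_{2i} = \emptyset$), reads all of $u$ inside $\A_1$ (no gate symbol occurs, and $\A_1$ has no gate transitions), is then forced to cross a gate $p_1 \xrightarrow{c} p_2$ exactly at the first gate symbol $c$, and reads $v$ entirely inside $\A_2$ (transfer transitions lead only from $\A_1$ to $\A_2$). This makes the forward implication routine: an accepting run exhibits a source $p_1 \in \outsymgrp{1}{c}$ reached by $u$ and a target $p_2 \in \insymgrp{2}{c}$ from which $v$ is accepted, i.e.\ $u \in \langof{\autij{\A_1}{i}{j'}}$ and $v \in \langof{\autij{\A_2}{i'}{j}}$.

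The converse is the load-bearing direction, where the two independent witnesses must be \emph{aligned}: $u$ reaches some $c$-gate source while $v$ is accepted from some possibly different $c$-gate target, and I must exhibit a single gate joining a prefix that admits $u$ to a target that admits $v$. This is precisely what the equal gate partition guarantees: from $\langof{\autfromto{\A_{1c}}{I_i}{p_1}} = \cdots = \langof{\autfromto{\A_{1c}}{I_i}{p_n}}$ over $\insymgrp{2}{c} = \{p_1,\dots,p_n\}$, the string $uc$ reaches \emph{every} port of $\insymgrp{2}{c}$ once it reaches one of them, so in particular it reaches the target from which $v$ is accepted; concatenating yields an accepting run of $\autij{\A}{i}{j}$ over $ucv = w$. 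I expect the only delicate bookkeeping here to be matching the exit-port index $j'$ of $\A_1$ and the entry-port index $i'$ of $\A_2$ to the same gate symbol $c$ and invoking the language equalities of the equal gate partition at exactly this step.
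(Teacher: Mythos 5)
Your proposal is correct and follows essentially the same route as the paper's proof: soundness from \cref{lemma:gate-pre}(3) and \cref{lemma:gate-equal-suf}(2), and completeness by splitting on whether $w$ contains a gate symbol and then on whether the prefix fails (handled by \cref{lemma:gate-pre}) or, using the equal gate property to align the gate source reached by $u$ with the gate target accepting $v$, the suffix fails (handled by \cref{lemma:gate-equal-suf}). Your explicit ``if and only if'' decomposition of $\langof{\autij{\A}{i}{j}}$ at the first gate symbol is just a spelled-out version of the step the paper compresses into ``because $\A$ is an equal gate NFA, it must hold that $v \notin \langof{\autij{\A_2}{i'}{j}}$''.
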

\begin{proof}
    From \cref{lemma:gate-pre,lemma:gate-equal-suf} it follows that $\langof{\autij{\A}{i}{j}} \cap \langof{\autij{\C}{i}{j}} = \emptyset$. \cref{lemma:gate-pre} also ensures that words in $\compl{\langof{\autij{\A}{i}{j}}} \cap (\Sigma \smallsetminus \gsym)^*$ are accepted by $\autij{\C}{i}{j}$.
    
    Let $w = ucv \notin \langof{\autij{\A}{i}{j}}$, such that $u \in (\Sigma \smallsetminus \gsym)^*$, $c \in \gsym$, and $v \in \Sigma^*$. If $c = c_{j'}$, such that $u \notin \langof{\autij{\A_1}{i}{j'}}$, \cref{lemma:gate-pre} ensures that $w \in \langof{\autij{\C}{i}{j}}$. Otherwise, because $\A$ is an equal gate NFA, it must hold that $c = c_{i'}$ such that $v \notin \langof{\autij{\A_2}{i'}{j}}$, and we can use \cref{lemma:gate-equal-suf} to say that $w \in \langof{\autij{\C}{i}{j}}$.
\end{proof}

\subsection{Disjoint Gate NFAs}

The constraints on the input NFA enforced by the equal gate condition are quite strict. Suppose we do not require the equality of $\A_2$'s entry-port languages and give $\suf{\C}$ information about which gates $\A$ uses when reading a word $w \in \Sigma^*$ (see \cref{fig:gate-comp-general} for a scheme). Each inner entry port of $\A_2$ is treated individually instead of grouping the ports by the symbols on their gates as before.

Suppose $w = ucv$, where $u \in (\Sigma \smallsetminus \gsym)^*$, $c\in \gsym$, and $v \in \Sigma^*$, and let us again view $\A$ as a standard NFA. If an entry port $p$ of $\A_2$ is reached after reading $uc$, then $\suf{\C}$ checks that $v$ cannot be accepted from $p$ in $\A_2$. If that is the case for any $p$ that is reached by reading $uc$, $\suf{\C}$ accepts $w$. However, consider $w$ such that when reading it, $\A$ can pass through two gates $p_1 \xrightarrow{c} p_2$ and $p'_1 \xrightarrow{c} p'_2$, where $p_2 \neq p'_2$. If $v$ is accepted by $\A_2$ from $p_2$, then $w \in \langof{\A}$. However, if $v$ is not accepted by $\A_2$ starting in $p'_2$, then $w$ is accepted by $\suf{\C}$, even though it belongs to $\langof{\A}$. This motivates an additional input condition for $\A$.

We say $(\A_1, \A_2)$ is a \emph{disjoint gate partition} of $\A$, if for all $c \in \gsym$, $0 \leq i \leq k$, $0 \leq j \leq \ell$, and every two gates $p_1 \xrightarrow{c} p_2, p'_1 \xrightarrow{c} p'_2 \in \gates$, we have 
$$\langof{\autfromto{\A_1}{I_{1i}}{p_1}} \cap \langof{\autfromto{\A_1}{I_{1i}}{p'_1}} \neq \emptyset \implies \langof{\autfromto{\A_2}{p_2}{F_{2j}}} = \langof{\autfromto{\A_2}{p'_2}{F_{2j}}}.$$ 
This means that if there is a possibility to reach $p_1$ and $p'_1$ under the same prefix $u$, both $p_2$ and $p'_2$ represent the same languages in $\A_2$. If $\A$ has a disjoint gate partition, it is a \emph{disjoint gate NFA}.

Suppose $\A$ is a disjoint gate NFA. The inner entry port sets of $\A_2$ are one-element sets containing individual inner entry ports of $\A_2$. If we index the inner entry ports of $\A_2$ as $\{p_{k+1}, \ldots, p_{k'}\}$, then the inner entry port sets of $\A_2$ are $\innerinportsets_2 = (\{p_{i'}\})_{k < i' \leq k'}$.

The automaton $\suf{\C} = (\suf{Q}, \Sigma, \suf{\delta}, \suf{\inportsets}, \suf{\outportsets})$ is composed from $\A_1$
and the complement of $\A_2$ with respect to $\Sigma$, which is an automaton
$\C_2 = (\elc{Q}_2, \Sigma, \elc{\delta}_2, \elc{\inportsets}_2, \elc{\outportsets}_2)$, as follows.
\begin{itemize}
    \item $\suf{Q} = Q_1 \cup \elc{Q}_2$
    \item $\suf{\delta} =  \delta_1 \cup \elc{\delta}_2 \cup 
    \{ p \xrightarrow{c} p' \mid p \xrightarrow{c} p_{i'} \in \gates \text{ and } p' \in \elc{I}_{2i'} \}$
    \item $I_{\mathit{suf}(i)} = I_{1i}$ for $I_i \in \inportsets$ (note that $I_{2i} = \emptyset$).
    \item $F_{\mathit{suf}(j)} = \elc{F}_{2j}$ for $F_j \in \outportsets$.
\end{itemize}

\begin{lemma}
    \label{lemma:gate-disjoint-suf}
    If $\A$ is a disjoint gate NFA, for each $I_i \in \inportsets$ and $F_j \in \outportsets$, and $w \in \Sigma^*$, the following holds.
    \begin{enumerate}
        \item If $w = ucv$, such that $w \notin \langof{\autij{\A}{i}{j}}$, $u \in (\Sigma \smallsetminus \gsym)^*$, $c = c_{j'} \in \gsym$, $v \in \Sigma^*$, and $u \in \langof{\autij{\A_1}{i}{j'}}$ then $w \in \langof{\autij{\suf{\C}}{i}{j}}$.
        \item $\langof{\autij{\A}{i}{j}} \cap \langof{\autij{\suf{\C}}{i}{j}} = \emptyset$
    \end{enumerate}
\end{lemma}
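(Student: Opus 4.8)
The plan is to prove the two claims separately, mirroring the structure of \cref{lemma:gate-pre,lemma:gate-equal-suf}. Throughout, I fix $I_i\in\inportsets$, $F_j\in\outportsets$ and record the key structural fact about $\suf{\C}$: it starts in $I_{\mathit{suf}(i)}=I_{1i}\subseteq Q_1$, simulates $\A_1$ while no gate symbol has been read, crosses a gate $p\xrightarrow{c}p'$ (with $p'\in\elc{I}_{2i'}$ whenever $p\xrightarrow{c}p_{i'}\in\gates$) exactly when the first gate symbol occurs, and thereafter simulates $\C_2$ up to an exit port in $\elc{F}_{2j}=F_{\mathit{suf}(j)}$. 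Since $\A_1$ has no transitions under $\gsym$ and $(\A_1,\A_2)$ is a sequential partition, any run of $\suf{\C}$ over a word $w=ucv$ with $u\in(\Sigma\smallsetminus\gsym)^*$ and $c\in\gsym$ the first gate symbol must read all of $u$ inside $\A_1$, reaching some state of the set $S$ of $\A_1$-states reachable from $I_{1i}$ by $u$, and can leave $\A_1$ only via a $c$-gate. Both claims revolve around the resulting set $P=\{p_2\mid \exists p_1\in S,\ p_1\xrightarrow{c}p_2\in\gates\}$ of $\A_2$-entry ports reachable by $uc$.

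For claim (1) I would construct an accepting run of $\autij{\suf{\C}}{i}{j}$ on $w=ucv$. The hypothesis $u\in\langof{\autij{\A_1}{i}{j'}}$ gives an exit port $p_1\in\outsymgrp{1}{c}$ reached by $u$, so $p_1\in S$ has a $c$-gate $p_1\xrightarrow{c}p_{i'}\in\gates$ and hence $p_{i'}\in P$. I read $u$ in $\A_1$ to $p_1$, cross this gate into some $p'\in\elc{I}_{2i'}$, and then need $\C_2$ to accept $v$ in the slice $\autij{\C_2}{i'}{j}$. This holds because $w\notin\langof{\autij{\A}{i}{j}}$ together with $p_{i'}\in P$ forces $v\notin\langof{\autfromto{\A_2}{p_{i'}}{F_{2j}}}=\langof{\autij{\A_2}{i'}{j}}$ (otherwise $\A$ would accept $w$ along $I_{1i}\to p_1\xrightarrow{c}p_{i'}\to F_{2j}$), whence $v\in\compl{\langof{\autij{\A_2}{i'}{j}}}=\langof{\autij{\C_2}{i'}{j}}$; appending the accepting $\C_2$-run completes the run. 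This direction does not use disjointness—it only inspects the single port actually traversed.

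For claim (2) I would argue by contradiction, assuming $w\in\langof{\autij{\A}{i}{j}}\cap\langof{\autij{\suf{\C}}{i}{j}}$. If $w$ contains no gate symbol, every $\suf{\C}$-run stays in $\A_1$ and never reaches $\elc{F}_{2j}$, so $w\notin\langof{\autij{\suf{\C}}{i}{j}}$—a contradiction; hence $w=ucv$ with $c$ the first gate symbol. The accepting $\A$-run crosses some gate $p_1\xrightarrow{c}p_2$ with $p_1\in S$, $p_2\in P$, and $v\in\langof{\autfromto{\A_2}{p_2}{F_{2j}}}$, while the accepting $\suf{\C}$-run crosses some gate $p_1'\xrightarrow{c}p_{i'}$ with $p_1'\in S$, $p_{i'}\in P$, and (since $\C_2$ accepts $v$) $v\notin\langof{\autfromto{\A_2}{p_{i'}}{F_{2j}}}$. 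As $p_1,p_1'$ are both reached by $u$, we have $u\in\langof{\autfromto{\A_1}{I_{1i}}{p_1}}\cap\langof{\autfromto{\A_1}{I_{1i}}{p_1'}}\neq\emptyset$, so the disjoint gate condition yields $\langof{\autfromto{\A_2}{p_2}{F_{2j}}}=\langof{\autfromto{\A_2}{p_{i'}}{F_{2j}}}$, contradicting that $v$ lies in the former but not the latter.

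I expect the disjointness step in claim (2) to be the main obstacle: the subtle point is that an accepting $\suf{\C}$-run may enter $\A_2$ through a different port ($p_{i'}$) than the one used by an accepting $\A$-run ($p_2$), and only the disjoint gate partition hypothesis guarantees that these two ports recognize the same suffix language given the common prefix $u$—which is exactly what prevents $w$ from being accepted by both automata. I would also verify the boundary bookkeeping, namely that $I_{2i}=\emptyset$ and the absence of gate symbols in $\A_1$ are genuinely used to force each run to stay in $\A_1$ over $u$ and to split $w$ uniquely as $ucv$, ruling out premature crossings into $\A_2$ or acceptance within $\A_1$.
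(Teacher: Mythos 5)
Your proposal is correct and follows essentially the same route as the paper's proof: claim (1) picks the port of $\A_2$ actually reached via the $c$-gate after $uc$, deduces $v\notin\langof{\autij{\A_2}{i'}{j}}$ from $w\notin\langof{\autij{\A}{i}{j}}$, and assembles the accepting run of $\suf{\C}$; claim (2) compares the gate used by an accepting run of $\A$ with the gate used by a hypothetical accepting run of $\suf{\C}$ and invokes the disjoint-gate condition on the common prefix $u$ to equate the two suffix languages. The only cosmetic difference is that you phrase claim (2) as a contradiction while the paper argues directly, and you spell out the structural bookkeeping (no gate symbols in $\A_1$, $I_{2i}=\emptyset$, unique split at the first gate symbol) somewhat more explicitly.
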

\begin{proof}
    \begin{enumerate}
        \item Consider any entry port $p_{i'} \in Q_2$ that $\A$ reaches after reading $uc$. Because $w \notin \langof{\autij{\A}{i}{j}}$, it must hold that $v \notin \langof{\autij{\A_2}{i'}{j}}$, from which follows $v \in \langof{\autij{\C_2}{i'}{j}}$. Because $\elc{I}_{2i'}$ can be reached in $\suf{\C}$ under $uc$ in the same way $p_{i'}$ was reached, we have $w \in \langof{\autij{\C}{i}{j}}$.
         
        \item Suppose $w \in \langof{\autij{\A}{i}{j}}$, and note that $\suf{\C}$ may transfer from $\A_1$ to $\C_2$ only when reading the first gate symbol appearing in $w$. If $w$ contains no gate symbols, then all runs of $\suf{\C}$ over $w$ stay in $\A_1$ and do not accept. 
        
        Now suppose that $w = ucv$, where $u \in (\Sigma \smallsetminus \gsym)^*$, $c \in \gsym$, and $v \in \Sigma^*$. Then $\A$ must be able to pass through a gate $q \xrightarrow{c} p_{i'} \in \gates$ such that $v \in \langof{\autij{\A_2}{i'}{j}}$. Let $r \xrightarrow{c} p_{i''} \in \gates$ be a gate through which $\A$ passes in a different run over $w$. Because both $q$ and $r$ were reached when reading $u$ from $I_{1i}$, by the definition of disjoint gate NFAs we get $\langof{\autij{\A_2}{i'}{j}} = \langof{\autij{\A_2}{i''}{j}}$. That yields $v \notin \langof{\autij{\C_2}{i'}{j}} = \langof{\autij{\C_2}{i''}{j}}$, which means $w \notin \langof{\autij{\suf{\C}}{i}{j}}$.
      \qedhere
    \end{enumerate}
\end{proof}

\begin{theorem}
    \label{lemma:gate-disjoint}
    If $\A$ is a disjoint port gate NFA, then for all $I_i \in \inportsets$ and $F_j \in \outportsets$, we have $\langof{\autij{\C}{i}{j}} = \compl{\langof{\autij{\A}{i}{j}}}$.
\end{theorem}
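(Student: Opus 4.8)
The plan is to mirror exactly the structure of the proof of \cref{lemma:gate-equal}, merely swapping the disjoint-specific suffix lemma for the equal one and adjusting the case split. Since $\C = \pre{\C} \Cup \suf{\C}$, we have $\langof{\autij{\C}{i}{j}} = \langof{\autij{\pre{\C}}{i}{j}} \cup \langof{\autij{\suf{\C}}{i}{j}}$ for every $I_i \in \inportsets$ and $F_j \in \outportsets$, so the whole argument reduces to combining facts already established in \cref{lemma:gate-pre,lemma:gate-disjoint-suf}. I would prove the two inclusions separately: soundness as disjointness from $\langof{\autij{\A}{i}{j}}$, and completeness via a case analysis on the first gate symbol.

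For soundness I would show $\langof{\autij{\A}{i}{j}} \cap \langof{\autij{\C}{i}{j}} = \emptyset$. This is immediate: part~3 of \cref{lemma:gate-pre} gives $\langof{\autij{\A}{i}{j}} \cap \langof{\autij{\pre{\C}}{i}{j}} = \emptyset$, part~2 of \cref{lemma:gate-disjoint-suf} gives $\langof{\autij{\A}{i}{j}} \cap \langof{\autij{\suf{\C}}{i}{j}} = \emptyset$, and the union of the two accepting sets remains disjoint from $\langof{\autij{\A}{i}{j}}$.

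For completeness I would take any $w \notin \langof{\autij{\A}{i}{j}}$ and split on the first occurrence of a gate symbol. If $w \in (\Sigma \smallsetminus \gsym)^*$, then part~1 of \cref{lemma:gate-pre} directly gives $w \in \langof{\autij{\pre{\C}}{i}{j}}$. Otherwise I would write $w = ucv$ with $u \in (\Sigma \smallsetminus \gsym)^*$ the maximal gate-free prefix, $c = c_{j'} \in \gsym$ the first gate symbol, and $v \in \Sigma^*$, and split further on membership of $u$ in $\langof{\autij{\A_1}{i}{j'}}$: if $u \notin \langof{\autij{\A_1}{i}{j'}}$, part~2 of \cref{lemma:gate-pre} yields $w \in \langof{\autij{\pre{\C}}{i}{j}}$; if $u \in \langof{\autij{\A_1}{i}{j'}}$, then since also $w \notin \langof{\autij{\A}{i}{j}}$, part~1 of \cref{lemma:gate-disjoint-suf} yields $w \in \langof{\autij{\suf{\C}}{i}{j}}$. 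In every case $w \in \langof{\autij{\C}{i}{j}}$, giving $\compl{\langof{\autij{\A}{i}{j}}} \subseteq \langof{\autij{\C}{i}{j}}$.

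The only subtlety — and where the disjoint hypothesis is actually consumed — lives inside \cref{lemma:gate-disjoint-suf}, not at this top level. Unlike the equal case, $\suf{\C}$ here tracks individual gates rather than grouping entry ports by gate symbol, so the genuine work of ruling out spurious acceptance (when two gates under the same symbol lead to rear ports with different languages) has already been discharged by the disjointness condition in the proof of that lemma. Consequently, at the theorem level I expect no real obstacle beyond verifying that the case split on $u \in \langof{\autij{\A_1}{i}{j'}}$ lines up precisely with the hypotheses of the two invoked parts; the combination is otherwise a routine gluing of the two lemmas.
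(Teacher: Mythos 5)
Your proposal is correct and takes essentially the same route as the paper, whose entire proof of this theorem is the one-liner ``Implied by \cref{lemma:gate-pre,lemma:gate-disjoint-suf}''; you have simply spelled out that gluing (disjointness from parts~3 and~2, completeness by the case split on the first gate symbol and on $u \in \langof{\autij{\A_1}{i}{j'}}$), and the hypotheses of the invoked lemma parts do line up exactly as you claim.
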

\begin{proof}
    Implied by \cref{lemma:gate-pre,lemma:gate-disjoint-suf}.
\end{proof}

\subsection{Modifications of Presented Methods}\label{sec:modifications}
Returning to the note in the beginning of \cref{sec:gate-general}, now we show how some of the restrictions placed on the input automata may be changed.

Let us consider a port gate NFA $\A$ with no gate symbols in the second component. If we also require that $\A_1$ has no outer exit ports, we can apply the presented methods in a ``reversed'' manner. Note that it may not be sufficient to simply run gate complementation on $\rev{\A}$ because while $\A$ may have an equal or disjoint gate partition, $\rev{\A}$ may not, although it is still a gate NFA. So, by introducing the following changes, we are able to complement more automata with gate complementation than if only the reverse operation was used.

In the construction of $\pre{\C}$, $\A_1$ is complemented with respect to the alphabet $\Sigma$ and $s$ receives loops under $\Sigma \smallsetminus \gsym$. The exit ports of $\C_1$ are not added as exit ports of $\pre{\C}$, which means we assign $F_{pre(j)} = \{s\}$ for each $F_j \in \outportsets$.

For $\suf{\C}$, $\A_2$ is complemented with respect to $\Sigma \smallsetminus \gsym$. For equal gate NFAs, $t$ gets loops under $\Sigma$. And lastly, the entry ports of $\C_2$ are added to the entry ports of $\suf{\C}$. Formally, for each $I_i \in \inportsets$, we set $I_{\mathit{suf}(i)} = \{t\} \cup \elc{I}_{2i}$ for equal gate NFAs, and $I_{\mathit{suf}(i)} = I_{1i} \cup \elc{I}_{2i}$ for disjoint gate NFAs. 

\medskip

If $\A$ is a gate NFA with no gate symbols in $\A_1$, but has outer entry ports in $\A_2$, a word $w$ containing gate symbols can be accepted either in a run starting in $A_1$ and passing through a gate into $A_2$, or in a run starting in an initial state of $A_2$. In this case, the complement of $\A$ is the intersection of the following two automata:
\begin{enumerate}
    \item gate complement of $\A$, where the outer entry ports of $\A_2$ are removed before the complementation
    \item complement of $\A_2$ with its entry ports (constructed by any method)
\end{enumerate}

If, instead, there are no gate symbols in $\A_2$ and $\A_1$ has outer exit ports at the same time, the process is very similar; the gate complement of $\A$ is intersected with the complement of $\A_1$.

\section{Subexponential Upper Bound for a Restricted NFA Class in Sequential Complementation}\label{sec:seq-upper-bound}

This section aims to give a non-exponential upper bound for sequential complementation applied to a specific class of automata. This upper bound targets the general version of the algorithm (\cref{sec:port-general}), but is also applicable to the simplified construction (\cref{sec:seq-simple}).

The sequential construction is prone to explosion when tracking too many simultaneous instances of the complement $\C_2$ of the rear component. Although the resulting complements' size can be in practice substantially reduced with minimization algorithms, there is a class of automata for which the sequential composition produces small results as-is. In particular, we target the automata where the sequential complement tracks at most one active instance of $\C_2$ at all times. We do not claim that this condition encapsulates all automata for which sequential complementation can give small results; we only identify a class of automata for which it is simple enough to see the validity of the given upper bound.

In the following, we assume a complement $\C = (\elc{Q}, \Sigma, \elc{\delta}, \elc{\inportsets}, \elc{\outportsets})$ defined in \cref{sec:port-general}, constructed from a deterministic and complete front component $\A_1$ and a complement $\C_2$ of the rear component $\A_2$.

\begin{theorem}
  If every state $(q, R) \in \elc{Q}$ satisfies $|R| \leq 1$, then $|\C| \leq |\A_1| + n|\C_2|$, where $n$ is the number of states in $\A_1$ to which there is a transition from $\A_1$'s inner exit port. Moreover, if $\A_2$ is deterministic or reverse-deterministic, then $|\C| \in \mathcal{O}(|\A|^2)$.
\end{theorem}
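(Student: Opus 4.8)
The plan is to bound $|\C|$ by counting the reachable states of $\C$, all of which have the form $(q,R)\in Q_1\times 2^{\elc{Q}_2}$, and to use the hypothesis $|R|\le 1$ to split them into two easily countable groups. First I would partition the reachable states into those of \emph{type~(a)} with $R=\emptyset$ and those of \emph{type~(b)} with $|R|=1$; by the hypothesis there are no others. A type~(a) state is completely determined by its first component $q\in Q_1$, so there are at most $|Q_1|=|\A_1|$ of them, which accounts for the first summand. The entire difficulty is then concentrated in showing that the type~(b) states number at most $n\cdot|\C_2|$.

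For the type~(b) count, the key idea is that in every reachable state $(q,\{r\})$ the single live instance $r$ of $\C_2$ was created by the most recent traversal of a transfer transition in $\gates$, which in the same step moves the deterministic, complete front component to a state $q'_0=\delta_1(q^\ast,a^\ast)$, where $q^\ast$ is the fired inner exit port and $a^\ast$ the corresponding gate symbol. By definition, these post-gate states are exactly the ``states in $\A_1$ to which there is a transition from $\A_1$'s inner exit port'', so there are at most $n$ of them; call this set $T$. The plan is to establish the invariant that whenever $R=\{r\}$ the front component lies in $T$: here the hypothesis $|R|\le 1$ is essential, because if the front component were at an inner exit port and the matching gate symbol were read while the instance is still alive, the construction would add a fresh instance from the port complement of the new entry port to $R'$ \emph{in addition} to $r$'s successor, forcing $|R'|\ge 2$ and contradicting the hypothesis. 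Given this invariant, each type~(b) state pairs some $q\in T$ with a single $r\in\elc{Q}_2$, so there are at most $n\cdot|\elc{Q}_2|=n\cdot|\C_2|$ of them, and summing the two groups yields $|\C|\le|\A_1|+n\,|\C_2|$.

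The $\mathcal{O}(|\A|^2)$ bound then follows by instantiating the two factors. If $\A_2$ is deterministic, its forward powerset complement $\co{\det{\A_2}}$ reaches only singleton subsets (plus at most one sink), hence $|\C_2|=\mathcal{O}(|\A_2|)$; if $\A_2$ is reverse-deterministic, the same holds for the reverse powerset complement $\rev{\co{\det{\rev{\A_2}}}}$, since $\rev{\A_2}$ is deterministic. As $\A_1$ is given deterministic and complete we have $|\A_1|=\mathcal{O}(|\A|)$ and $n\le|Q_1|=\mathcal{O}(|\A|)$, and with $|\A_2|\le|\A|$ we obtain $|\C|\le|\A_1|+n\,|\C_2|=\mathcal{O}(|\A|)+\mathcal{O}(|\A|)\cdot\mathcal{O}(|\A|)=\mathcal{O}(|\A|^2)$.

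The main obstacle I anticipate is the invariant in the second paragraph, i.e.\ pinning down that the live instance is genuinely the ``most recent spawn'' and that the front component provably stays inside the $n$ post-gate states for as long as an instance is alive. This requires a run-based induction maintaining, alongside $|R|\le 1$, that $R=\{r\}$ implies $q\in T$ and that no surviving gate can be fired; the delicate case is a run in which a gate traversal coincides exactly with the death of the current instance, so that $|R|$ stays at most $1$ while the identity of the live instance, and hence the relevant post-gate anchor in $T$, changes. Getting this coincidence right, and confirming that the targeted automaton class is precisely the one where $T$ coincides with the stated successor set, is the crux of the argument.
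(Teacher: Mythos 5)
Your overall architecture --- splitting the reachable states into those with $R=\emptyset$ (at most $|\A_1|$ of them) and those with $|R|=1$ (to be bounded by $n|\C_2|$), then instantiating $|\C_2|=\mathcal{O}(|\A_2|)$ for (reverse-)deterministic $\A_2$ and $n\le|Q_1|$ to get $\mathcal{O}(|\A|^2)$ --- is the same as the paper's, and the last step is handled identically. The gap is the invariant on which your entire $n|\C_2|$ count rests: it is \emph{false} that $R=\{r\}$ forces the first component $q$ to lie in the set $T$ of post-gate states. A gate traversal does put $\A_1$ into some $q_0'\in T$, but on every subsequent letter the deterministic, complete front component keeps moving while the single instance of $\C_2$ stays alive, and nothing confines these later states to $T$: by the theorem's definition, $T$ collects only states with an \emph{incoming} transition from an inner exit port, not everything reachable from such states. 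Your own justification (that reading a gate symbol at an exit port would force $|R'|\ge 2$) only rules out a second spawn; it says nothing about where $q$ wanders afterwards. In \cref{fig:port-simple-ex} the invariant happens to hold only because the post-gate state $s$ is a sink; replace $s$ by a chain $q_1\rightarrow q_2\rightarrow q_3$ of non-exit-port states and you obtain reachable states $(q_2,\{r\})$, $(q_3,\{r'\})$ whose first components are outside $T$ and which your count simply omits. So the ``run-based induction'' you flag as the crux cannot be carried out as stated.

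The paper's argument for the $|R|=1$ states takes a different route: it never claims $q\in T$. Instead it observes that the projection $(q,\{r\})\mapsto r$ maps transitions of $\C$ onto transitions of $\C_2$, so the $|R|=1$ states form at most $n$ ``copies'' of $\C_2$ --- one per activation state in $T$ --- where the first components within a copy range over arbitrary states of $Q_1$ but (by determinism of $\A_1$) each $r\in\elc{Q}_2$ is claimed to appear with a single $q$ per copy; the bound $n|\C_2|$ is then $(\text{number of copies})\times(\text{size of a copy})$, not $|T|\times|\elc{Q}_2|$ as in your proposal. If you want to salvage your write-up, the statement to aim for is of that form: for each activation anchor $q_0\in T$, the reachable $|R|=1$ states traceable to $q_0$ inject into $\elc{Q}_2$. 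Be aware that even this requires an argument (two words that drive a nondeterministic $\C_2$ into the same $r$ could in principle drive $\A_1$ to different states from $q_0$), so the injectivity within a copy --- not membership of $q$ in $T$ --- is where the real work lies.
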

\begin{proof}
  There are at most $|\A_1|$ states of the form $(q, \emptyset)$, where $q \in Q_1$.

  For every transition $(q, \{r\}) \xrightarrow{a} (q', \{r'\}) \in \elc{\delta}$, the transition $r \xrightarrow{a} r'$ exists in $\C_2$. Therefore (and because $\A_1$ is deterministic), the states $(q, R)$ with $|R| = 1$ will form several (possibly interconnected) copies of $\C_2$, each copy differing from the other only in the first element $q$ of each state. The different first elements may be caused by several instances of $\C_2$ being activated when $\A_1$ is in different states. As the amount of these states of $\A_1$ is $n$ as stated in the theorem and because $\A_1$ is deterministic, there are at most $n|\C_2|$ states of the form $(q, \{r\})$, with $q \in Q_1$ and $r \in \elc{Q_2}$.

  If $\A_2$ is deterministic or reverse-deterministic, then $|\C_2| \in \mathcal{O}(|\A_2|)$. Therefore, $|\C| \in \mathcal{O}(|\A_1| + n|\A_2|) \subseteq \mathcal{O}(|\A_1| + |\A_1|\cdot|\A_2|) \subseteq \mathcal{O}(|\A|^2)$.
\end{proof}

\begin{theorem}
  Every state $(q, R) \in \elc{Q}$ satisfies $|R| \leq 1$ if all of the following holds:
  \begin{enumerate}
    \item If $p$ is an inner exit port of $\A_1$ with a transfer $a$-transition to $\A_2$, no inner exit port in $\A_1$ is reachable from $p$ by a path starting with an $a$-transition.
    \item For every inner exit port $p$ of $\A_1$ and $a \in \Sigma$, there is at most one $a$-transition from $p$ to $\A_2$.
    \item $\A_2$ has no outer exit ports.
  \end{enumerate}
\end{theorem}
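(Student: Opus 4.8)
The plan is to track how the second component $R$ of a reachable state $(q,R)$ of $\C$ evolves along a run and to show that, under the three hypotheses, at most one copy of $\C_2$ is ever active. Recall from the construction in \cref{sec:port-general} that in a state $(q,R)$ the first component $q$ tracks the unique run of the deterministic and complete front component $\A_1$, while $R$ holds the states of the currently active copies of $\C_2$, one state per copy. By the definition of $\elc{\delta}$, on reading $a$ every tracked state $r_i\in R$ advances to a single successor $s_i$ with $r_i\xrightarrow{a}s_i$ in $\C_2$, and the only way $R$ can gain new elements is through the fresh copies supplied by $\tprodinter(q,a)$. Hence $|R|$ can never exceed the total number of $\C_2$-copies activated so far on that run (copies are never dropped on a surviving run, and two copies sharing a state only make $R$ smaller). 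The whole statement therefore reduces to: along any run of $\C$ at most one copy of $\C_2$ is activated.

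First I would split the activations into their two sources. An initial copy sits in $R$ from the start precisely when $\A_2$ contributes an outer entry port for the slice at hand, in which case $\elc{I}_i=\{(q_0,\{r_0\})\mid r_0\in\elc{I}_{2i}\}$; the third hypothesis rules this out and forces $\elc{I}_i=\{(q_0,\emptyset)\}$, so every run of $\C$ begins with $R=\emptyset$. The other source is $\tprodinter$: a fresh copy is spawned at a step reading $a$ exactly when the tracked $\A_1$-state $q$ is an inner exit port with $\gates(q,a)\neq\emptyset$, and by the second hypothesis $|\gates(q,a)|\leq 1$, so $\tprodinter(q,a)$ then contributes at most one new state to $R$. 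Consequently the number of activations along a run equals the number of steps at which the $\A_1$-run sits at an inner exit port and reads a symbol carried by one of its transfer transitions.

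Next I would bound this number of \emph{spawning steps} by one, which is the crux and where the first hypothesis is used. Since $\A_1$ is deterministic and complete, its run on the input $w=a_1\ldots a_m$ is a unique path $q_0\xrightarrow{a_1}q_1\xrightarrow{a_2}\cdots\xrightarrow{a_m}q_m$ inside $Q_1$. Suppose the earliest spawning step occurs at position $t$, so $p:=q_{t-1}$ is an inner exit port carrying a transfer $a_t$-transition to $\A_2$. The continuation $q_{t-1}\xrightarrow{a_t}q_t\xrightarrow{a_{t+1}}\cdots$ of this path is a path in $\A_1$ that starts with an $a_t$-transition out of $p$, so the first hypothesis guarantees that no inner exit port of $\A_1$ is reachable along it; in particular none of $q_t,q_{t+1},\ldots,q_m$ is an inner exit port. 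Thus no later step can spawn a copy, there is at most one spawning step, and combined with the previous paragraph this gives $|R|\leq 1$ for every reachable $(q,R)$, as required.

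The main obstacle I expect is making the monotonicity reduction of the first paragraph fully rigorous: I must argue that on any run reaching $(q,R)$ no copy has been silently lost --- a missing $\C_2$-successor stalls the \emph{entire} $\C$-run rather than a single copy, so it cannot produce a reachable state with a spuriously small $R$ --- and that copies may merge but never split, so that $|R|$ really is at most the activation count. A secondary point requiring care is the precise reading of the first hypothesis: the ``path starting with an $a$-transition from $p$'' must be understood inside $\A_1$ (the transfer transition leaving $\A_1$ is handled separately by $\tprodinter$), and the $a_t$-successor $q_t$ of $p$ must itself be covered so that $q_t$ is already known not to be an exit port. Finally, I would double-check that the operative content of the third hypothesis is the absence of \emph{outer entry} ports of $\A_2$, since it is the entry ports, via $\elc{I}_{2i}$, that govern the initial copy in $\elc{I}_i$.
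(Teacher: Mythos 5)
Your proof is correct and follows essentially the same route as the paper's own argument: condition 3 forces $R=\emptyset$ at the entry ports of $\C$, condition 2 bounds each spawn by $\tprodinter$ to one new state, and condition 1 ensures that after the first spawn the unique run of the deterministic $\A_1$ never revisits an inner exit port, so no second instance of $\C_2$ is ever activated. Your closing caveat is also well taken: the paper's proof explicitly uses condition 3 to guarantee that $R$ is empty at entry ports, which is governed by the outer \emph{entry} ports of $\A_2$ (via $\elc{I}_{2i}$), so the statement's ``no outer exit ports'' appears to be a typo for ``no outer entry ports.''
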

\begin{proof}
  Condition 3 ensures that $R$ is empty for any entry port $(q, R)$ of $\C$. Condition 2 ensures that $\tprodinter$ adds at most one new state to $R$ at a time, for any $(q, R) \in \elc{Q}$. Due to Condition 1, once a new instance of $\C_2$ is activated by $\tprodinter$, no other instance of $\C_2$ will be activated in the same run of $\C$, as the underlying run of $\A_1$ will not pass through another inner exit port. As $\tprod$ does not increase the number of states in $R$, the size of $R$ in any state $(q, R)\in \C$ will be either zero or one.
\end{proof}

Note that in the simple setting in \cref{sec:seq-simple}, Conditions 2 and 3 are trivially satisfied.

Moreover, the given upper bound is strictly better than that of forward powerset for the same class of automata. The example in \cref{fig:port-simple-ex} and its generalization described in \cref{sec:seq-simple} satisfy the given conditions, but have exponential deterministic complements.

\section{Implementation Details}

\vspace{-0.0mm}
\subsection{Gate Complementation} \label{sec:impl-gate-details}
\vspace{-0.0mm}

\aligater implements the generalized gate complementation methods.
The implementation tries to find all possible partitions of the input NFA $\A$
satisfying the input conditions of any of the methods presented in \cref{sec:gate-general}. The input conditions are formulated as the language equivalence or disjointness of
certain states and evaluated by
\mata; language equivalence is tested using the \emph{antichains algorithm}
\cite{WulfDHR06}, language disjointness is checked via an emptiness check on the product automaton.

From all found partitions (each of them tagged with the method whose
conditions it satisfies), a~single one is chosen based on the
following rules:
\begin{enumerate}
  \item  First, if there are partitions that do not require intersecting
		automata during the complementation operation (cf.\
		\cref{sec:modifications}), we discard the rest.
  \item  Second, if there are partitions tagged with \methodEqual, we discard
    the rest (the construction of \methodEqual has a~higher potential of
    delivering smaller results).
  \item  Finally, we pick a~partition with the smallest difference of sizes
    of~$\A_1$ and~$\A_2$.
\end{enumerate}
It is possible that
no suitable partition is found. In that case the algorithm aborts.
(Note that for sequential complementation, the algorithm always finds
a~suitable partition, which might be trivial.)
%

\subsection{Evaluation Settings} \label{sec:impl-eval-settings}

The following settings of \aligater were used in the experiments:
\begin{itemize}
    \item \fwdpws and \revpws implement the respective powerset comple\-men\-ta\-tions
      without any minimization within the process.

    \item
      $x$\plusmin denotes that we ran method~$x\in\{\fwdpws,\revpws\}$ while applying Hopcroft's
      minimization~\cite{Hopcroft71,ValmariL08} (implemented within
      \mata) immediately after the powerset construction.

    \item \sequential runs sequential
      complementation 
      with the following settings:
    \begin{itemize}
        \item all three partitioning approaches described in
          \cref{sec:impl-seq} were used and the smallest result was
          returned,
        \item the last component was complemented using reverse powerset complementation, and
        \item every intermediate result was reduced by the function
          \texttt{reduce()} from \mata, which uses simulations to reduce
          automata~\cite{BustanG03}.
    \end{itemize}

  \item \gate runs gate complementation, where both front and rear
    components are complemented by both forward and reverse powerset
    complementation and the smaller complement is selected for
    further processing. 

    \item The following settings were common for both \sequential and \gate:
    \begin{itemize}
        \item minimization was applied after every powerset construction,
        \item the result was reduced by \rabit~\cite{MayrC13}, with the
          \texttt{lookahead} parameter set to 10, and
        \item we ran the algorithm on both the original input NFA and its
          reverse (which was reversed back after the algorithm) and kept the
          smaller output.
    \end{itemize}

  \item Unreachable and dead states (i.e., states from which no
    accepting state is reachable) were removed at the end of each
    algorithm.
\end{itemize}

\subsection{Benchmark Details} \label{sec:impl-benchmarks}

From \NfaBench, we used NFAs from the directories
\texttt{regexps/regexps\_distinct} and \texttt{regexps/regexps\_union} 
(regular expressions used in network intrusion detection and
protocol identification, 6,624 benchmarks),
\texttt{automata\_inclusion} (abstract regular model checking, 272 benchmarks),
\texttt{email\_filter} (regular expressions, 75 benchmarks),
\texttt{presburger-explicit} (decision procedure for Presburger arithmetic, 174 benchmarks),
\texttt{z3-noodler} (string solving, 2,040 benchmarks; for the
\texttt{z3-noodler/intersection} subdirectory, we used the first 1,000 out of
the 17,000 benchmarks), and
\texttt{ws1s} (decision procedure for WS1S, 265 benchmarks).
The benchmarks originate from various sources
(\cite{BouajjaniHHTV08,HeizmannHP13,Sutcliffe2017,FiedorHJLV17,ChenCHHLS24,HabermehlHHHL24}
and \url{https://regexlib.com/}).

\section{Additional Experiments} \label{sec:additional-exp}

We compared the effect of \fwdpws and \revpws, with and without minimization.
The results are given in \cref{fig:scplot-fwd-x-rev}.
It shows that neither \fwdpws nor \revpws
is clearly superior to the other method
and that their strengths are complementary.
Although \fwdpws seems
to perform slightly better than \revpws when minimization is employed,
it is important to note that \revpws\plusmin ran out of resources in fewer cases~(79)
than \fwdpws\plusmin~(132).

\figScplotFwdRev  

\cref{fig:scplot-input-x-fwdmin} shows how difficult it is to complement our benchmarks. There are numerous cases where the minimal forward powerset automaton is many orders of magnitude larger than the input automaton, or even where the computation ran out of resources. However, the majority of the collected benchmarks (5,459 in total) had minimal deterministic complements of the same or smaller size than the original automaton.

\begin{figure}
\centering
\begin{minipage}{.5\textwidth}
\includegraphics[width=\textwidth]{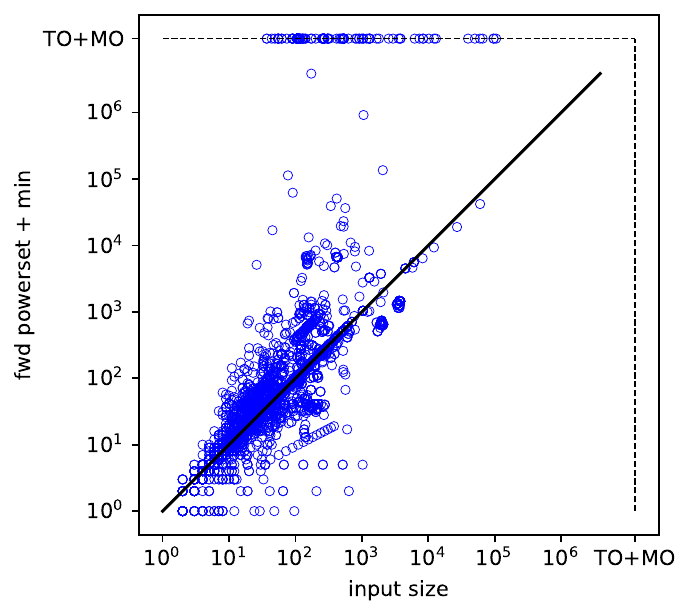}
\end{minipage}
\caption{Comparison of the size of the input NFAs and their minimal deterministic complements.}\label{fig:scplot-input-x-fwdmin}
\end{figure}

In addition to \cref{fig:scplot-fwd-x-rev}, \cref{fig:scplot-fwd-x-fwdrev} shows
how combining forward and reverse powerset complementation
brings significant improvement compared to the classical forward complementation,
again without (left) and with (right) minimization.

\begin{figure}
\begin{minipage}{.5\textwidth}
\includegraphics[width=\textwidth]{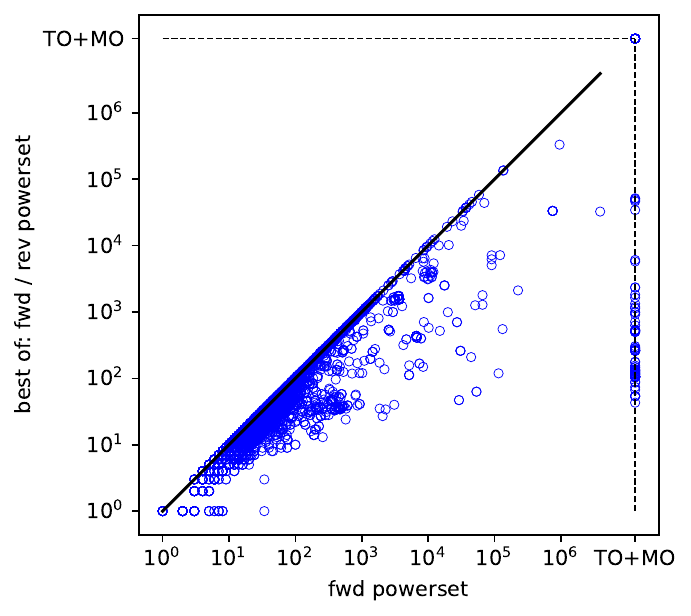}
\end{minipage}
\begin{minipage}{.5\textwidth}
\includegraphics[width=\textwidth]{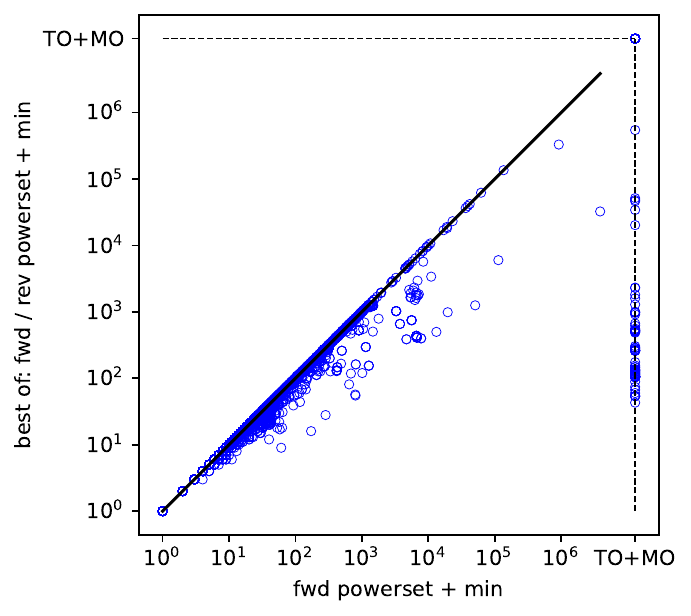}
\end{minipage}
\caption{Comparison of \fwdpws with best of \fwdpws and \revpws
without (left) or with (right) minimization.}\label{fig:scplot-fwd-x-fwdrev}
\end{figure}

Moreover, we have evaluated the effect of the heuristic for deciding between
forward and reverse powerset, as described in \cref{sec:reverse-powerset}.
\cref{table:heu-blowup} shows that the heuristic indeed effectively predicts
resource exhaustion.

\tabTimeouts  

\cref{fig:heuristic} evaluates the heuristic's performance by
comparing complement sizes of \fwdpws\plusmin, \revpws\plusmin, the
best of both, and the heuristic's choice (if
$\detsuccs{\A} \geq \detsuccs{\rev{\A}}$, choose \revpws\plusmin,
otherwise select \fwdpws\plusmin). The heuristic outperforms both
individual methods and approaches the effectiveness of a~portfolio that runs
both methods and picks the smaller output.

\figCactus  


\fi

\end{document}